\documentclass[amsart,11pt,oneside,english]{article}
\usepackage{amssymb,amsmath,amsfonts, amsmath,mathtools,
amsthm,euscript,mathrsfs,MnSymbol,verbatim,enumerate,multirow,bbding,slashed,multicol,color,array, esint,babel, tikz,tikz-cd,tikz-3dplot,tkz-graph,pgfplots,ytableau,graphicx,float,rotating,hyperref,geometry,mathdots,savesym,cite, wasysym,amscd,graphicx,pifont,float,setspace,multirow,wrapfig,picture,subfigure, amsthm, enumitem}
\usepackage[normalem]{ulem} 
\usepackage[utf8]{inputenc}
\usepackage{prettyref}
\usepackage[T1]{fontenc}
\geometry{verbose,tmargin=2.6cm,bmargin=2.6cm,lmargin=2.5cm,rmargin=2.5cm,footskip=1cm}
\usepackage{datetime}
\usepackage{thm-restate}

\numberwithin{equation}{section}
\usetikzlibrary{arrows,positioning,decorations.pathmorphing,   decorations.markings, matrix, patterns}
\hypersetup{colorlinks=true}
\hypersetup{linkcolor=black}
\hypersetup{citecolor=black}
\hypersetup{urlcolor=black}
\theoremstyle{plain}
\newtheorem*{thm*}{Theorem}
\theoremstyle{plain}%
\newtheorem{thm}{Theorem}[section]
\newtheorem{conjec}[thm]{Conjecture}
\newtheorem{lem}[thm]{Lemma}

\theoremstyle{definition}
\newtheorem{defn}[thm]{Definition}

\newtheorem{rem}[thm]{Remark}

\newcommand{\tr}{{\bf tr} }

%

\numberwithin{equation}{section}

\tikzset{
  big arrow/.style={
    decoration={markings,mark=at position 1 with {\arrow[scale=1.5,#1]{>}}},
    postaction={decorate},
    shorten >=0.4pt},
  big arrow/.default=black}

\usetikzlibrary{fit}

\begin{document}

\begin{titlepage}
\begin{center}
\vspace{4cm}
{\Huge\bfseries  USp($4$)-models\\  }
\vspace{2cm}
{%
\LARGE  Mboyo Esole$^{\spadesuit}$ and Patrick Jefferson$^\diamondsuit$\\}
\vspace{1cm}

{\large $^{\spadesuit}$ Department of Mathematics, Northeastern University}\par
{\large  Boston, MA 02115, USA}\par

\vspace{.3cm}
{\large $^\diamondsuit$Center for Theoretical Physics, Department of Physics,\\
Massachusetts Institute of Technology
 }\par
{  Cambridge, MA 02139, U.S.A}\par
\vspace{3cm}
{ \bf{Abstract:}}\\
\end{center}
{\date{\today\  \currenttime}}

We study the geometry of elliptic fibrations satisfying the conditions of Step 2 of Tate's algorithm with a discriminant of valuation $4$. We call such geometries  USp($4$)-models, as the dual graph of their special fiber is the twisted affine Dynkin diagram $\widetilde{\text{C}}_2^t$. 
These geometries are used in string theory to model gauge theories with the non-simply-laced  Lie group USp($4$) on a smooth divisor $S$ of the base. 
Starting with a singular Weierstrass model of a USp($4$)-model,  we present a crepant resolution of its singularities. 
We study the fiber structure of this smooth elliptic fibration and identify the fibral divisors up to isomorphism as schemes over $S$.  
These are $\mathbb{P}^1$-bundles over $S$ or double covers of $\mathbb{P}^1$-bundles over $S$. 
We compute basic topological invariants such as the triple intersections of the fibral divisors and the Euler characteristic of the  USp($4$)-model. 
 In the case of Calabi-Yau threefolds, we also compute  the  Hodge numbers.  
 We study the compactfications of M/F theory on a USp($4$)-model  Calabi--Yau threefold.

\vfill 
{\today\par}

{Keywords: Anomaly cancellations, Five and Six-dimensional minimal supergravity theories, Elliptic fibrations, Crepant morphism, Resolution of singularities, Weierstrass models.}

\end{titlepage}
\newpage 

\tableofcontents

\newpage 
\section{Introduction}

In the heart of F-theory is an algorithm that assigns  to  a given elliptic fibration a reductive Lie group $G$ and a representation $\mathbf{R}$ of $G$ \cite{Vafa:1996xn,Morrison:1996na,Morrison:1996pp,Bershadsky:1996nh, Heckman:2018jxk}.  
An elliptic fibration associated to a group $G$ via the F-theory algorithm is called a {\em $G$-model} \cite{Euler,G2}. 
$G$-models provide   a  beautiful bridge between the geometry of elliptic fibrations and aspects of gauge theories. 
The  Lie algebra $\mathfrak{g}$ of the group $G$ 
is determined by the types of singular fibers over  each irreducible component of  discriminant locus of the elliptic fibration.  
 The representation $\mathbf{R}$ is  encoded in the structure of singular fibers over  codimension-two points of the base corresponding to intersections of the components of the discriminant locus or  other singularities of the discriminant locus 
\cite{Aspinwall:1996nk,Morrison:2011mb,Marsano, GM1,G2}. 
 Furthermore, the global structure of the group $G$ depends on the Lie algebra $\mathfrak{g}$ and the Mordell--Weil group of the elliptic fibration \cite{Esole:2017hlw,Esole:2014dea}.  

The study of $G$-models in F-theory is a beautiful playground  for both mathematicians and physicists that has evolved tremendously in the last few years \cite{Anderson:2017zfm, EJJN1, EJJN2, Euler,Char1, Char2,Box,Morrison:2012np}. 
For example, a better understanding of $G$-models and  the flops connecting them has emerged from the construction of explicit crepant resolutions of Weierstrass models and  studies of the relative extended  K\"ahler cone over Weierstrass models \cite{G2,F4,SU2SU3,SU2G2,E7,ES,EY}.  Additionally, the  introduction of modern  intersection theory techniques as a method to compute topological invariants and characteristic numbers of $G$-models has turned conjectures into theorems and replaced painful  archaic computational methods with elegant algebraic techniques that not only apply to Calabi--Yau threefolds and fourfolds, but also provide results that apply to the broader setting of elliptic $n$-folds that are not necessarily Calabi--Yau  
\cite{AE1,AE2,Euler,Char1,Char2}. 

Despite this progress, there are still many open questions to address regarding the geometry of $G$-models. $G$-models of small rank have been studied in detail with the exception of one that has fallen between the cracks of recent investigations, namely the USp($4$)-model, which will be the focus of this paper.
Specifically, the $G$-models corresponding to Spin($n$) groups up to $n=8$ and all the 
 semi-simple compact Lie groups of rank up to three (with the exception of USp($4$)) have been studied in great detail (we recall that USp($4$) is isomorphic to Spin($5$) and has rank two\footnote{Spin groups of rank up to three are characterized by the following accidental isomorphisms: 
$$
 \text{Spin($3$) $\cong$ SU($2$), \  Spin($4$) $\cong$ SU($2$)$\times$SU($2$), \    Spin($5)$ $\cong$ USp($4$), 
 \   
 Spin($6$) $\cong$ SU($4$).
  }
$$
There are  only four semi-simple compact Lie groups of rank two, namely 
$$
\text{Spin}(4)\cong \text{SU}(2)\times\text{SU}(2), \quad \text{SU}(3),  \quad \text{Spin}(5)\cong\text{USp}(4), \quad \text{G}_2.
 $$
 The simply connected collisions of rank up to three are:
$$
  \text{SU($2$)$\times$SU($2$)},\quad 
 \text{SU($2$)$\times$SU($3$)},\quad  \text{SU($2$)$\times$USp($4$)},\quad   \text{SU($2$)$\times$G$_2$}.
$$
  The group USp($4$) contains Spin($4$) as a maximal subgroup and thus also  SU($2$)$\times$U($1$): 
$$
 \text{SU($2$)$\times$U($1$)}\subset \text{SU($2$)$\times$SU($2$)}\subset\text{USp($4$)}.
$$
 }.)   Spin($8$), G$_2$,  and Spin($7$)-models are studied in  \cite{G2}, Spin($6$) $\cong$ SU($4$), Spin($3$) $\cong$ SU($2$), and SU($3$) are studied in \cite{ESY1,ES}, and a SU($2$)$\times$U($1$)-model is studied in \cite{SU2U1}. The  Spin($4$) $\cong$ SU($2$)$\times$SU($2$) model is the only Spin-model for which $G$ is not simple but properly semi-simple; this model corresponds to a collision of two Kodaira fibers with dual graphs of type $\widetilde{\text{A}}_1$ and is studied in \cite{SO4}. 
The SO($3$), SO($5$), and SO($6$)-models are studied in \cite{SO}. The E$_7$-model is studied in \cite{E7}, the SU($5$)-model in \cite{EY,ESY2}, the SU($2$)$\times$ SU($3$)-model in \cite{SU2SU3}, the SU($2$)$\times$ G$_2$-model in \cite{SU2G2}.

The USp($4$)-model has a long history dating back to the early days of F-theory as it was instrumental in identifying that the $G$-models corresponding to elliptic fibrations with a fiber of type   I$^{\text{ns}}_{2n}$ or I$^{\text{ns}}_{2n+1}$ over the generic point of  their discriminant   locus were USp($2n$)-models \cite{Aspinwall:2000kf}. 
Note that a decorated Kodaira fiber of type I$_{2n}^{\text{ns}}$  or I$_{2n+1}^{\text{ns}}$ in both cases has a dual graph of type  $\widetilde{\text{C}}_{n}^t$; however, the geometric fibers of these two cases differ---respectively, the geometric fibers are of type  $\widetilde{\text{A}}_{2n-1}$  and  $\widetilde{\text{A}}_{2n}$. 
Since fibers of type I$_{2n+1}^{\text{ns}}$ have $\mathbb{Q}$-factorial terminal singularities, in this paper we define a  USp($4$)-model by means of a type I$_4^{\text{ns}}$ fiber.

 Carefully studying the geometry of the USp($4$)-model is a rewarding exercice as  the USp($4$)-model illustrates many of the subtleties of $G$-models while keeping the algebraic technicality at a minimum. In particular, the USp($4$)-model exhibits the following features:

\begin{enumerate}
\item {\bf Absence of flops}. The generic USp($4$)-model does not have any flops over the Weierstrass model \cite{EJJN1} as it is also the case for the  G$_2$-model \cite{G2},  the F$_4$-model \cite{F4}, and the generic USp($2n$)-model \cite{EJJN1}. 
 A smooth USp($4$)-model can be obtained by a crepant resolution of
 a singular Weierstrass model  by using just two blowups with smooth centers. 
 It follows that the geometry of the USp($4$)-model is particularly easy to study.
 
 \item  {\bf Non-split fibers without terminal $\mathbb{Q}$-factorial singularities}. $G$-models with fibers of type I$_{2n+1}^{\text{ns}}$ are known to have $\mathbb{Q}$-factorial terminal singularities. The USp($4$)-model with a fiber of type I$_4^{\text{ns}}$ is free of such problems and can therefore be defined as a flat fibration obtained by a  crepant resolution of a singular Weierstrass model.
\item {\bf Pseudo-real representations and half-hypermultiplets}. The fundamental representation of USp($2n$) is pseudo-real and can be carried by half-hypermultiplets.  
Other models with pseudo-real representations are the E$_7$-models with matter in the representation $\bf{56}$ \cite{E7,Box} and 
Spin($n$)-models for $n=3,4,5$ mod $8$ with matter in  irreducible  symplectic Majorana spinor representations.\footnote{
Symplectic Majorana spinors in Euclidean signature exist in dimension $3$, $4$, $5$ mod $8$; in dimension $4$ mod $8$, a symplectic Majorana spinor is not irreducible as it can be decomposed into two irreducible symplectic Majorana--Weyl spinors \cite[Table 2]{VanProeyen:1999ni}.}

 We recall that SU($2$) is isomorphic to USp($2$) so that the fundamental representation of SU($2$) is a  pseudo-real representation as a particular case of the fundamental representation of  USp($2n$). 
\item {\bf The simplest counter-example to the Katz--Vafa method}. The USp($4$)-model is  one of the simplest examples of a $G$-model for which the  Katz--Vafa method \cite{KV}  fails to correctly predict the matter representation  associated to a $G$-model in F-theory. 
Other examples are the Spin($7$)-model and the  SU($2$)$\times$G$_2$-model \cite{SU2G2}. For more example, see \cite{EK.KV}.
 
 \item  {\bf Counter-example to intersecting brane counting of matter multiplets in F-theory}. The USp($4$)-model is an example of a $G$-model for which the counting of charged hypermultiplets  does not follow the typical intersecting brane picture.  
 In intersecting brane models, the number of charged matter at the intersection of two branes is given by the intersection number of the two branes. 
\item {\bf Frozen representations.}
As a direct consequence of the previous point, we can have frozen representations \cite{F4,G2}: absence of hypermultiplets charged under a given representation while there are rational curves along the fiber that carry the weights of a representation for which there are no hypermultiplets charged. 
\end{enumerate}

They are other interesting questions that are not addressed here, such as the weak coupling limit of the theory and its tadpole cancellations \cite{Esole:2012tf,EFY,CDE}.

\subsection{Definition}

A USp($2n$)-model can be defined using a fiber of either type  I$_{2n}^{\text{ns}}$ or  I$_{2n+1}^{\text{ns}}$ as they both have the  affine Dynkin diagram of type $\widetilde{\text{C}}_n^{t}$ as a dual graph. 
However, since we care about the existence of a crepant resolution it is safer to consider only the I$_{2n}^{\text{ns}}$ case as 
models with I$_{2n+1}^{\text{ns}}$ fibers  generically suffer from $\mathbb{Q}$-factorial terminal singularities, which are lethal for the existence of a crepant resolution. In this spirit, we use the following restricted definition. 

  \begin{defn}
 A USp($2n$)-model is an elliptic fibration over a base $B$ 
 with a choice of a divisor $S$ that is an irreducible component of the discriminant locus such that 
 \begin{enumerate}
\item  The  fiber over the generic point of $S$ is of type  I$_{2n}^{\text{ns}}$. 
\item  The fiber over the generic point  of any other components of the discriminant locus is of  Kodaira type I$_1$ or II. 
 \end{enumerate}
  \end{defn}

A complete analysis of the Weierstrass model of a USp($2n$)-model should take into account the different options for the valuations of the coefficients of the Weierstrass equations:
\begin{equation}
a_1=a_3=0, \quad \nu(a_2)=0, \quad \nu(a_4)\geq n, \quad \nu(a_6)\geq 2n , \quad \nu(\Delta)=2n. 
\end{equation}

 One could think of such data as describing different types of complex structures for the elliptic fibration and their relevance can be appreciated by observing that they  affect the fiber structure of the elliptic fibration in the manner summarized in Figures \ref{Fig.V1}, \ref{Fig.V2}, and \ref{Fig.V3}.
The fiber of type I$_n^{\text{ns}}$ is defined by Step 2 of Tate's algorithm, which characterizes fibers with multiplicative reduction. 
We use the original Tate's algorithm and not one of the tables usually presented in the  F-theory literature as they miss several important subtleties. We define the USp($4$)-model by means of the following Weierstrass model\footnote{Let $\nu$ be the valuation defined by the smooth divisor $S$ supporting the fiber I$_n^{{\text{ns}}}$.  After an appropriate translation, the Weierstrass coefficients  are such that $\nu(a_3)$, $\nu(a_4)$, and $\nu(a_6)\geq 0$, $\nu(b_2)=0$,  and $\nu(\Delta)=n$. 
The non-split condition is the statement that $b_2$ is not a perfect square modulo $s$ (meaning in the residue field of the generic point of $S$, $b_2$ is not a perfect square).  We complete the square in $y$ to effectively have $a_1=a_3=0$.}:
\begin{equation}
\mathscr{E}_0: zy^2-(x^3 + a_2 x^2z + a_{4,2+\alpha} s^{2+\alpha} x z^2+ a_{6,4+\beta} s^{4+\beta} z^3)=0, \quad \alpha\beta=0, \quad \alpha,\beta \in \mathbb{Z}_{\geq 0},
\end{equation}
where we assume that the coefficients $a_2$, $a_{4,2+\alpha}$, $a_{6,4+\beta}$ are generic. 
The condition $\alpha\beta=0$ is necessary for the valuation of the discriminant  to be  $4$ as required for a fiber of type I$_4$. 
Since the coefficients are generic, a$_2$ is not a perfect square modulo $s$ and we have a fiber of type I$_n^{{\text{ns}}}$.
We distinguish three cases depending on the values of $\alpha$, and $\beta$:
 \begin{equation}
 (\alpha=0,\beta=0), \quad (\alpha=0,\beta>0), \quad (\alpha>0,\beta=0).
 \end{equation} 
One can think  of the first one as the most general one, and the two others as specializations obtained by increasing the valuation of $a_4$ or $a_6$ without changing the valuation of the discriminant. The valuation of $a_4$ and $a_6$ cannot increase simultaneously as otherwise it will force an increase in the valuation of the discriminant locus and the fiber will no longer be of type I$_4$. The previous statement explains why we have three cases to consider. 
 These three cases have very different fiber  structures as summarized in Figures \ref{Fig.V1}, \ref{Fig.V2}, and \ref{Fig.V3}. The first case ($\alpha=\beta=0$) is the one usually presented in F-theory tables.

\subsection{Matter representations}
 The Lie group USp($4$) is the unique compact, connected, and simply connected complex Lie group with Lie algebra of type C$_2$. 
 It has rank two and real dimension ten.  
Its center is the cyclic group $\mathbb{Z}/2\mathbb{Z}$ generated by minus the identity element of USp($4$). The quotient of USp($4$) by its center  is isomorphic to the orthogonal group SO($5$) since  USp($4$)  is isomorphic to Spin($5$). 
 Both  USp($4$) and SO($5$) have the same Lie algebra since the Lie algebra of type C$_2$ is isomorphic to the Lie algebra of type B$_2$. 
 The two fundamental representations of C$_2$ have respectively dimension $4$ and $5$ and correspond respectively to the spinor and vector representations of B$_2$.

When we look at  USp($4$) as the group Spin($5$), the relevant representations are the adjoint representation of dimension $10$, the spinor representation of dimension $4$, and the vector representation of dimension $5$. 
The spinor and vector representations of Spin($5$) correspond respectively to the fundamental of USp($4$) and the traceless antisymmetric two-tensor representation of USp($4$). 
We use the physics convention of labeling representations by their dimensions in bold characters, as this will be sufficient to distinguish the representations we deal with in this paper. 
Thus, we denote the fundamental, the traceless antisymmetric 
 (the trace is defined with respect to the antisymmetric symplectic metric), and the adjoint representations of C$_2$  by (respectively) $\mathbf{4}$, $\mathbf{5}$, and $\mathbf{10}$. The representations $\bf{4}$, $\bf{5}$, and $\bf{10}$  of C$_2$ correspond to (respectively) the spinor, vector, and adjoint representation of $\mathfrak{so}_5$.  
The representation $\bf{4}$ is minuscule and pseudo-real while the representation $\bf{5}$ is quasi-minuscule and real.
The representation $\bf{10}$, being the adjoint representation, is always real.

The matter representation \textbf{R} associated to a USp($4$)-model is the sum of the above three irreducible representations: 
\begin{equation}
\bf{R}=\mathbf{4}\oplus\mathbf{5}\oplus\mathbf{10}.
\end{equation}

The fiber  I$_4^{\text{ns}}$  degenerates in codimension two producing vertical curves whose weights correspond to the representations $\mathbf{4}$, $\mathbf{5}$, and $\mathbf{10}$ of C$_2$. The representations $\bf{4}$ and $\bf{5}$ are the two fundamental representations of the Lie algebra  C$_2$ in the sense that their highest weights are the simple roots of C$_2$. 
Moreover, all the weights of the representations $\bf{4}$ and $\bf{5}$ are also weights of the adjoint representation $\bf{10}$. 
This fact explains why the USp($4$)-model has a unique crepant resolution.

The first codimension-two degeneration is 
    when one of the nodes of I$_4$ splits into two rational curves, producing weights in the representation $\bf{4}$. 
    Although the corresponding fiber is geometrically of Kodaira type I$_5$, the arithmetic 
  type of the fiber depends on the dimension of the base:
\begin{equation}
\begin{cases}
 \text{I}^{\text{ns}}_4 \to \text{I}^{\text{ns}}_5 \quad\text{if}\quad \dim\ B\geq 3\longrightarrow \text{no matter}, \\
  \text{I}^{\text{ns}}_4 \to \text{I}^{\text{* ss}}_0 \quad \   \text{if}\quad \dim B \geq 2\  \longrightarrow  \   \text{Representation $\bf{5}$}, \\
 \text{I}^{\text{ns}}_4 \to \text{I}^{\text{s}}_5 \quad \   \text{if}\quad \dim B =2\  \longrightarrow \  \text{Representation $\bf{4}$}, \\
 \text{I}^{\text{ns}}_4 \to \text{I}^{\text{*}}_0 \quad \   \text{if}\quad \dim B =2\  \longrightarrow  \   \text{Representation $\bf{5}$}, 
 \end{cases}
 \end{equation}
The second  codimension-two degeneration is when the non-split  node of I$_4^{\text{ns}}$ collapses into  a double node producing weights in the representation $\bf{5}$.  The fiber is an incomplete Kodaira fiber of type I$_0^*$:
  \begin{equation}
  \text{I}_4^{\text{ns}} \to \text{I}^*_0.
  \end{equation}

\subsection{Counting charged hypermultiplets in 5D and  6D}
A compactification of M-theory on a smooth Calabi--Yau threefold yields a five-dimensional ${\cal N}=1$ supergravity theory.                                                                                                      
Such a theory has vector multiplets and hypermultiplets, and the kinetic terms of the vector fields as well as Chern--Simons terms are encoded in a cubic prepotential that corresponds to the triple intersection polynomial of the divisors of the Calabi--Yau manifold. 
The prepotential can have a quantum contribution coming from integrating out massive charged  hypermultiplets. 
It follows that the prepotential depends on the number of charged hypermultiplets and in the best case, a direct comparison with geometry provides a determination of the number of charged hypermultiplets using the intersection ring of the Calabi--Yau threefold.

In the case of a USp($4$)-model in which the fiber I$_4^{\text{ns}}$ is over the generic point of a divisor $S$ of self-intersection $S^2$ in the base $B$ and genus $g$, the number of hypermultiplets charged under the representations $\bf{10}$, $\bf{5}$, and $\bf{4}$ are respectively  
\begin{equation}
n_{\bf 10 }=g,\quad 	n_{\bf 5}= 1-g+S^2, \quad n_{\mathbf{4}} = 16(1 - g) +  4S^2.
	\end{equation} 
 In Section \ref{Sec:Blowup}, we present a resolution by  a succession of two blowups with smooth centers. 
In Section \ref{Sec:Euler}, we compute the  Euler characteristic and the triple intersection numbers of the fibral divisors using the method developed in \cite{Euler}.
In the Calabi-Yau case, we also compute the Hodge numbers of the USp($4$)-model as in \cite{Euler}. 
We will specialize to the Calabi-Yau case and the compactification of M-theory on such a Calabi-Yau threefold to a five-dimensional supergravity theory.  
We determine the representation associated with a USp($4$)-model by computing the intersection of the rational curves appearing when the fiber degenerates over codimension-two points. 
 While the representation $\mathbf{4}$ is always present for a base of dimension two or higher, the representation $\mathbf{5}$ appears only if the base is a surface. 
We also illustrate the phenomena of frozen representations in F-theory. A representation is said to be {\em frozen} when there are curves carrying the weights of the representation but no physical states are charged under that representation. 
 By comparing the triple intersection numbers of the fibral divisors to the one-loop prepotential of the five dimensional minimal supergravity theory with the same matter content, we compute 
the multiplicity of each representation. We also check that these multiplicities match those of an anomaly free six-dimensional theory with a compatible spectrum of charged hypermultiplets.

\section{Weierstrass model, crepant resolutions, and fiber degenerations}

\noindent{\bf Conventions.}  Given a simple Lie algebra with Dynkin diagram $\mathfrak{g}$, we denote by $\widetilde{\mathfrak{g}}$ the untwisted affine Dynkin diagram which reduces to $\mathfrak{g}$ after removing one of its node of multiplicity one. 
We denote by $\widetilde{\mathfrak{g}}^t$ the twisted affine Dynkin diagram  obtained from $\widetilde{\mathfrak{g}}$ by reversing the direction of all the arrows of the Dynkin diagram of $\widetilde{\mathfrak{g}}$. The Cartan matrix of $\widetilde{\mathfrak{g}}^t$ is the transpose of the Cartan matrix of  $\widetilde{\mathfrak{g}}$ \cite{G2,Euler,F4}; thus,
the difference between  $\widetilde{\mathfrak{g}}$ and  $\widetilde{\mathfrak{g}}^t$ only matters when $\mathfrak{g}$ is not simply laced. 
We denote by $V(f_1, \ldots, f_n)$ the vanishing locus $f_1=\ldots =f_n=0$. 

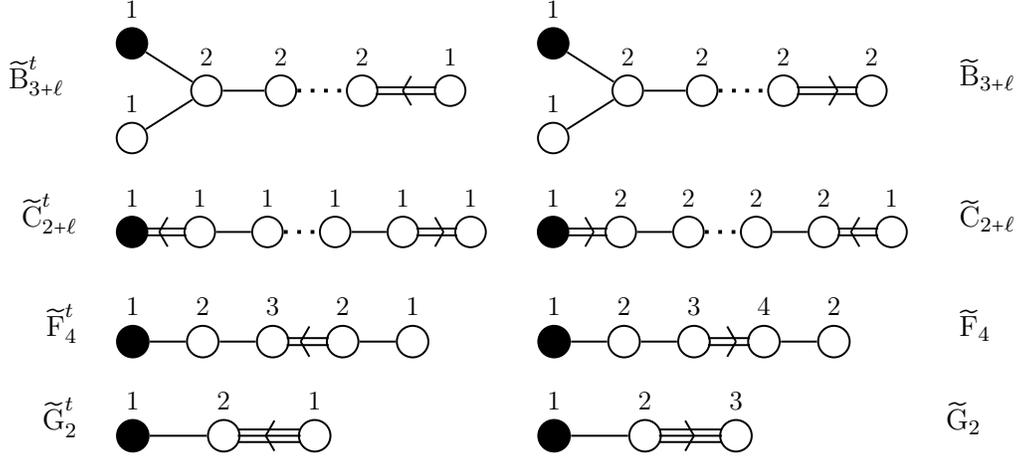
\begin{figure}[htb]
\begin{center}
 \scalebox{1}{$			
\begin{array}{r l  l l }

												   \begin{array}{c}
												  \\
						 \widetilde{\text{B}}_{3+\ell}^t  \\
						 \\
						 \end{array}
						 &\scalebox{.9}{$\begin{array}{c} \begin{tikzpicture}
				\node[draw,circle,thick,scale=1.25,fill=black,label=above:{1}] (1) at (-.1,.7){};
				\node[draw,circle,thick,scale=1.25,label=above:{1}] (2) at (-.1,-.7){};	
				\node[draw,circle,thick,scale=1.25,label=above:{2}] (3) at (1,0){};
				\node[draw,circle,thick,scale=1.25,label=above:{2}] (4) at (2.1,0){};
				\node[draw,circle,thick,scale=1.25,label=above:{2}] (5) at (3.3,0){};	
				\node[draw,circle,thick,scale=1.25,label=above:{1}] (6) at (4.6,0){};	
				\draw[thick] (1) to (3) to (4);
				\draw[thick] (2) to (3);
				\draw[ultra thick, loosely dotted] (4) to (5) {};
				\draw[thick] (3.5,-0.05) --++ (.9,0){};
				\draw[thick] (3.5,+0.05) --++ (.9,0){};
				\draw[thick]
					(3.9,0) --++ (60:.25)
					(3.9,0) --++ (-60:.25);
			\end{tikzpicture}\end{array}$}		&

\scalebox{.9}{$\begin{array}{c} \begin{tikzpicture}
				\node[draw,circle,thick,scale=1.25,fill=black,label=above:{1}] (1) at (-.1,.7){};
				\node[draw,circle,thick,scale=1.25,label=above:{1}] (2) at (-.1,-.7){};	
				\node[draw,circle,thick,scale=1.25,label=above:{2}] (3) at (1,0){};
				\node[draw,circle,thick,scale=1.25,label=above:{2}] (4) at (2.1,0){};
				\node[draw,circle,thick,scale=1.25,label=above:{2}] (5) at (3.3,0){};	
				\node[draw,circle,thick,scale=1.25,label=above:{2}] (6) at (4.6,0){};	
				\draw[thick] (1) to (3) to (4);
				\draw[thick] (2) to (3);
				\draw[ultra thick, loosely dotted] (4) to (5) {};
				\draw[thick] (3.5,-0.05) --++ (.9,0){};
				\draw[thick] (3.5,+0.05) --++ (.9,0){};
				\draw[thick]
					(4.1,0) --++ (120:.25)
					(4.1,0) --++ (-120:.25);
			\end{tikzpicture}\end{array}$}				
			&
			  \begin{array}{c}
												  \\
						 \widetilde{\text{B}}_{3+\ell}  \\
						 \\
						 \end{array}
			\\
			
 \widetilde{\text{C}}_{2+\ell}^t  
						   &
			\scalebox{.9}{$\begin{array}{c} \begin{tikzpicture}
				\node[draw,circle,thick,scale=1.25,fill=black,label=above:{1}] (1) at (-.2,0){};
				\node[draw,circle,thick,scale=1.25,label=above:{1}] (3) at (.8,0){};
				\node[draw,circle,thick,scale=1.25,label=above:{1}] (4) at (1.8,0){};
				\node[draw,circle,thick,scale=1.25,label=above:{1}] (5) at (2.8,0){};	
				\node[draw,circle,thick,scale=1.25,label=above:{1}] (6) at (3.8,0){};	
				\node[draw,circle,thick,scale=1.25,label=above:{1}] (7) at (4.8,0){};	
				\draw[thick]   (3) to (4);
				\draw[thick] (5) to (6);
				\draw[ultra thick, loosely dotted] (4) to (5) {};
				\draw[thick] (4.,-0.05) --++ (.6,0){};
				\draw[thick] (4.,+0.05) --++ (.6,0){};
								\draw[thick] (-.2,-0.05) --++ (.8,0){};
				\draw[thick] (-.2,+0.05) --++ (.8,0){};

				\draw[thick]
					(4.4,0) --++ (-120:.25)
					(4.4,0) --++ (120:.25);
					\draw[thick]
					(0.2,0) --++ (-60:.25)
					(0.2,0) --++ (60:.25);
			\end{tikzpicture}\end{array}$}
& 
 \scalebox{.9}{$\begin{array}{c} \begin{tikzpicture}
				\node[draw,circle,thick,scale=1.25,fill=black,label=above:{1}] (1) at (-.2,0){};
				\node[draw,circle,thick,scale=1.25,label=above:{2}] (3) at (.8,0){};
				\node[draw,circle,thick,scale=1.25,label=above:{2}] (4) at (1.8,0){};
				\node[draw,circle,thick,scale=1.25,label=above:{2}] (5) at (2.8,0){};	
				\node[draw,circle,thick,scale=1.25,label=above:{2}] (6) at (3.8,0){};	
				\node[draw,circle,thick,scale=1.25,label=above:{1}] (7) at (4.8,0){};	
				\draw[thick]   (3) to (4);
				\draw[thick] (5) to (6);
				\draw[ultra thick, loosely dotted] (4) to (5) {};
				\draw[thick] (4.,-0.05) --++ (.6,0){};
				\draw[thick] (4.,+0.05) --++ (.6,0){};
				\draw[thick] (-.2,-0.05) --++ (.8,0){};
				\draw[thick] (-.2,+0.05) --++ (.8,0){};

				\draw[thick]
					(4.2,0) --++ (-60:.25)
					(4.2,0) --++ (60:.25);
					\draw[thick]
					(0.4,0) --++ (-120:.25)
					(0.4,0) --++ (120:.25);
			\end{tikzpicture}\end{array}$}

			&
			  \begin{array}{c}
												  \\
						 \widetilde{\text{C}}_{2+\ell}  \\
						 \\
						 \end{array}

			\\
\widetilde{\text{F}}_4^t 
&			

\scalebox{.93}{$\begin{array}{c}\begin{tikzpicture}
				\node[draw,circle,thick,scale=1.25,fill=black,label=above:{1}] (1) at (0,0){};
				\node[draw,circle,thick,scale=1.25,label=above:{2}] (2) at (1,0){};
				\node[draw,circle,thick,scale=1.25,label=above:{3}] (3) at (2,0){};
				\node[draw,circle,thick,scale=1.25,label=above:{2}] (4) at (3,0){};
				\node[draw,circle,thick,scale=1.25,label=above:{1}] (5) at (4,0){};
				\draw[thick] (1) to (2) to (3);
				\draw[thick]  (4) to (5);
				\draw[thick] (2.2,0.05) --++ (.6,0);
				\draw[thick] (2.2,-0.05) --++ (.6,0);
				\draw[thick]
					(2.4,0) --++ (60:.25)
					(2.4,0) --++ (-60:.25);
			\end{tikzpicture}\end{array}$} & 
			\scalebox{.93}{$\begin{array}{c}\begin{tikzpicture}
				\node[draw,circle,thick,scale=1.25,fill=black,label=above:{1}] (1) at (0,0){};
				\node[draw,circle,thick,scale=1.25,label=above:{2}] (2) at (1,0){};
				\node[draw,circle,thick,scale=1.25,label=above:{3}] (3) at (2,0){};
				\node[draw,circle,thick,scale=1.25,label=above:{4}] (4) at (3,0){};
				\node[draw,circle,thick,scale=1.25,label=above:{2}] (5) at (4,0){};
				\draw[thick] (1) to (2) to (3);
				\draw[thick]  (4) to (5);
				\draw[thick] (2.2,0.05) --++ (.6,0);
				\draw[thick] (2.2,-0.05) --++ (.6,0);
				\draw[thick]
					(2.6,0) --++ (120:.25)
					(2.6,0) --++ (-120:.25);
			\end{tikzpicture}\end{array}$}
				&
				  \begin{array}{c}
												  \\
						 \widetilde{\text{F}}_{4}  \\
						 \\
						 \end{array}
				\\

			\widetilde{\text{G}}_2^t

 &			\scalebox{.93}{$\begin{array}{c}\begin{tikzpicture}
				\node[draw,circle,thick,scale=1.25,label=above:{1}, fill=black] (1) at (0,0){};
				\node[draw,circle,thick,scale=1.25,label=above:{2}] (2) at (1.3,0){};
				\node[draw,circle,thick,scale=1.25,label=above:{1}] (3) at (2.6,0){};
				\draw[thick] (1) to (2);
				\draw[thick] (1.5,0.09) --++ (.9,0);
				\draw[thick] (1.5,-0.09) --++ (.9,0);
				\draw[thick] (1.5,0) --++ (.9,0);
				\draw[thick]
					(1.9,0) --++ (60:.25)
					(1.9,0) --++ (-60:.25);
			\end{tikzpicture}\end{array}
		$} 		& 
		\scalebox{.93}{$\begin{array}{c}\begin{tikzpicture}
				\node[draw,circle,thick,scale=1.25,label=above:{1}, fill=black] (1) at (0,0){};
				\node[draw,circle,thick,scale=1.25,label=above:{2}] (2) at (1.3,0){};
				\node[draw,circle,thick,scale=1.25,label=above:{3}] (3) at (2.6,0){};
				\draw[thick] (1) to (2);
				\draw[thick] (1.5,0.09) --++ (.9,0);
				\draw[thick] (1.5,-0.09) --++ (.9,0);
				\draw[thick] (1.5,0) --++ (.9,0);
				\draw[thick]
					(2,0) --++ (120:.25)
					(2,0) --++ (-120:.25);
			\end{tikzpicture}\end{array}
		$} 
&

						 \widetilde{\text{G}}_{2}  				\\\end{array}$}
										\end{center}	
												 \caption{ Twisted affine Lie algebras vs affine Lie algebras for non-simply laced cases. In each case, the colored  node is the affine node and  the number of non-colored node is given by the index. 												 Only those on the left appear in the theory of elliptic fibrations as dual graphs of  the fiber over the generic point of an irreducible component of the discriminant locus. 
		In the notation of Kac \cite{Kac},   $\widetilde{\text{B}}^t_{3+\ell}$,    $\widetilde{\text{C}}^t_{1+\ell}$,     $\widetilde{\text{F}}^t_{4}$  , and $\widetilde{\text{G}}^t_{2}$  correspond respectively to 
		${\text{A}}^{(2)}_{2\ell-1}$,    ${\text{D}}^{(2)}_{1+\ell}$,     ${\text{E}}^{(2)}_{6}$,  and ${\text{D}}^{(3)}_{4}$.
												  \label{Fig:AffineLieAlgebras}}
						  \end{figure}

\subsection{Weierstrass model}
The following elliptic fibration is the generic Weierstrass model for a USp(4) model with  a fiber I$_4^{\text{ns}}$ over the Cartier divisor $S=V(s)$:
\begin{equation}
\mathscr{E}_0: zy^2=x^3 + a_2 x^2z + a_{4,2} s^2 x z^2+ a_{6,4} s^4z^3.
\end{equation}
The short  Weierstrass equation and the discriminant  are characterized by the invariants:
\begin{align}
& c_4=16(a_2^2-3 a_{4,2} s^2), \\
& c_6=32(2 a_2^3-9 a_2 a_{4,2} s^2+27 a_{6,4} s^{4}),\\
& \Delta=-16 s^4 \Delta', \quad \Delta'=4 a_2^3 a_{6,4}-a_2^2 a_4^2-18 a_2 a_{4,2} a_{6,4} s^2+4 a_{4,2}^3 s^2+27 a_{6,4}^2 s^{4}.
\end{align}
The fiber over $S=V(s)$ is of type I$_4^{\text{ns}}$ with valuation $(0,0,4)$ over $(c_4,c_6,\Delta)$ and $b_2$ not a perfect square modulo $s$. Generically, the Mordell-Weil group is trivial. 
To ensure that the fiber is of type I$_4^{\text{ns}}$, we should have $v(a_2)=0$ and $v(a_{4})=2$ or $v(a_{6})=4$. 
There are  three cases to consider depending on the  valuation of $a_4$ and $a_{6,4}$: 
\begin{align}
v(a_4)=2, v(a_6)=4,\\
v(a_4)>2, v(a_6)=4,\\
v(a_4)=2, v(a_6)>4.
\end{align}

The reduced discriminant is composed of two irreducible components intersecting non-transversally along the reducible locus
\begin{equation}
s=4 a_2^2(4a_2 a_6- a_{4}^2)=0.
\end{equation}
This is the codimension-two locus over which the generic fiber over $S$ degenerates;
it consists of two irreducible components. In the generic case, these two loci are  
\begin{equation}
S\cap V(a_2), \quad S\cap V(a_4^2-a_2 a_6).
\end{equation}
The locus $S\cap V(a_2)$ is the intersection of $S$ with the  cuspidal locus $V(c_4,c_6)$ of the elliptic fibration. 
Taking into account the three possible cases of valuations for $a_4$ and $a_6$, we have: 
\begin{align}
S\cap \Delta'~:~
\begin{cases}
v(a_4)=2, v(a_6)=4 & \Longrightarrow   s= a_2^2(4a_2 a_{6,4}- a_{4,2}^2)=0,\\
v(a_4)>2, v(a_6)=4 & \Longrightarrow   s=a_2^3 a_{6,4}=0,\\
v(a_4)=2, v(a_6)>4 & \Longrightarrow   s= a_2^2 a_{4,2}^2=0.
\end{cases}
\end{align}

\begin{table}[htb]
\begin{center}
$			
\begin{array}{|c|c| c  |} \hline
\vrule width 0pt height 3ex 
 \text{Fiber Type} & \text{ Dual graph  } & \text{Dual graph of Geometric fiber } \\\hline
 			  \begin{array}{c}
\text{I}^{\text{ns}}_{3}, \text{IV}^{\text{ns}}\\
						  \\
						 \widetilde{\text{A}}_{1}  \\
						 \\
						 \end{array}

 &			\scalebox{1}{$\begin{array}{c}\begin{tikzpicture}
				\node[draw,circle,thick,scale=1.25,label=above:{1}, fill=black] (1) at (0,0){};
				\node[draw,circle,thick,scale=1.25,label=above:{1}] (2) at (1.3,0){};
				\draw[thick] (0.15,0.1) --++ (.95,0);
				\draw[thick] (0.15,-0.09) --++ (.95,0);
				
			\end{tikzpicture}\end{array}
		$} &
		\scalebox{1}{$\begin{array}{c}\begin{tikzpicture}
				\node[draw,circle,thick,scale=1.25,label=above:{1}, fill=black] (1) at (-.6,0){};
				\node[draw,circle,thick,scale=1.25,label=above:{1}] (2) at (45:.6){};
				\node[draw,circle,thick,scale=1.25,label=below:{1}] (3) at (-45:.6){};
				\draw[thick] (0,0) --(1);
				\draw[thick] (0,0) --(2);
				\draw[thick] (0,0) --(3);
								\draw[<->,>=stealth',semithick,dashed]  (.8,-.5) arc (-30:30:1cm);
			\end{tikzpicture}\end{array}$}

				\\\hline

						 \begin{array}{c}

						  \text{I}^{*\text{ns}}_{\ell-3}\\
						  \\
						 \widetilde{\text{B}}_{\ell}^t  \\
						 \\
						 (\ell\geq 3)
						 \end{array}
						 &\scalebox{1}{$\begin{array}{c} \begin{tikzpicture}
				\node[draw,circle,thick,scale=1.25,fill=black,label=above:{1}] (1) at (-.1,.7){};
				\node[draw,circle,thick,scale=1.25,label=above:{1}] (2) at (-.1,-.7){};	
				\node[draw,circle,thick,scale=1.25,label=above:{2}] (3) at (1,0){};
				\node[draw,circle,thick,scale=1.25,label=above:{2}] (4) at (2.1,0){};
				\node[draw,circle,thick,scale=1.25,label=above:{2}] (5) at (3.3,0){};	
				\node[draw,circle,thick,scale=1.25,label=above:{1}] (6) at (4.6,0){};	
				\draw[thick] (1) to (3) to (4);
				\draw[thick] (2) to (3);
				\draw[ultra thick, loosely dotted] (4) to (5) {};
				\draw[thick] (3.5,-0.05) --++ (.9,0){};
				\draw[thick] (3.5,+0.05) --++ (.9,0){};
				\draw[thick]
					(3.9,0) --++ (60:.25)
					(3.9,0) --++ (-60:.25);
			\end{tikzpicture}\end{array}$}&
			
\scalebox{1}{$\begin{array}{c} \begin{tikzpicture}
				\node[draw,circle,thick,scale=1.25,fill=black,label=above:{1}] (1) at (-.1,.7){};
				\node[draw,circle,thick,scale=1.25,label=above:{1}] (2) at (-.1,-.7){};	
				\node[draw,circle,thick,scale=1.25,label=above:{2}] (3) at (1,0){};
				\node[draw,circle,thick,scale=1.25,label=above:{2}] (4) at (2.1,0){};
				\node[draw,circle,thick,scale=1.25,label=above:{2}] (5) at (3.3,0){};	
				\node[draw,circle,thick,scale=1.25,label=above:{1}] (6) at (4.6,.7){};	
								\node[draw,circle,thick,scale=1.25,label=above:{1}] (7) at (4.6,-.7){};	
				\draw[thick] (1) to (3) to (4);
				\draw[thick] (2) to (3);
				\draw[ultra thick, loosely dotted] (4) to (5) {};
				\draw[thick] (5) to (6); 
				\draw[thick] (5) to (7);
				\draw[<->,>=stealth',semithick,dashed]  (5,-.5) arc (-30:30:1.2cm);
			\end{tikzpicture}\end{array}$}
			
			\\\hline
			\begin{array}{c}
			     \text{I}_{2\ell+2}^{\text{ns}}
			\\
			\\
			\vrule width 0pt height 3ex 
 \widetilde{\text{C}}_{\ell+1}^t  \\
						 \\
						 (\ell\geq 1)
 \end{array}
   &
			\scalebox{.95}{$\begin{array}{c} \begin{tikzpicture}
				\node[draw,circle,thick,scale=1.25,fill=black,label=above:{1}] (1) at (-.2,0){};
				\node[draw,circle,thick,scale=1.25,label=above:{1}] (3) at (.8,0){};
				\node[draw,circle,thick,scale=1.25,label=above:{1}] (4) at (1.8,0){};
				\node[draw,circle,thick,scale=1.25,label=above:{1}] (5) at (2.8,0){};	
				\node[draw,circle,thick,scale=1.25,label=above:{1}] (6) at (3.8,0){};	
				\node[draw,circle,thick,scale=1.25,label=above:{1}] (7) at (4.8,0){};	
				\draw[thick]   (3) to (4);
				\draw[thick] (5) to (6);
				\draw[ultra thick, loosely dotted] (4) to (5) {};
				\draw[thick] (4.,-0.05) --++ (.6,0){};
				\draw[thick] (4.,+0.05) --++ (.6,0){};
								\draw[thick] (-.2,-0.05) --++ (.8,0){};
				\draw[thick] (-.2,+0.05) --++ (.8,0){};

				\draw[thick]
					(4.4,0) --++ (-120:.25)
					(4.4,0) --++ (120:.25);
					\draw[thick]
					(0.2,0) --++ (-60:.25)
					(0.2,0) --++ (60:.25);
			\end{tikzpicture}\end{array}$}

			&

			\scalebox{.95}{$\begin{array}{c} \begin{tikzpicture}
				\node[draw,circle,thick,scale=1.25,fill=black,label=above:{1}] (1) at (-.2,0){};	
				\node[draw,circle,thick,scale=1.25,label=below:{1}] (3a) at (.8,-.8){};
				\node[draw,circle,thick,scale=1.25,label=below:{1}] (4a) at (1.8,-.8){};
				\node[draw,circle,thick,scale=1.25,label=below:{1}] (5a) at (2.8,-.8){};	
				\node[draw,circle,thick,scale=1.25,label=below:{1}] (6a) at (3.8,-.8){};	
								\node[draw,circle,thick,scale=1.25,label=above:{1}] (3b) at (.8,.8){};
				\node[draw,circle,thick,scale=1.25,label=above:{1}] (4b) at (1.8,.8){};
				\node[draw,circle,thick,scale=1.25,label=above:{1}] (5b) at (2.8,.8){};	
				\node[draw,circle,thick,scale=1.25,label=above:{1}] (6b) at (3.8,.8){};	

				\node[draw,circle,thick,scale=1.25,label=above:{1}] (7) at (4.8,0){};	
				\draw[thick]   (4b)--(3b)--(1)--(3a)--(4a);
				\draw[thick]   (5b)--(6b)--(7)--(6a)--(5a);
				\draw[ultra thick, loosely dotted] (4a) to (5a) {};
								\draw[ultra thick, loosely dotted] (4b) to (5b) {};
\draw[<->,>=stealth',semithick,dashed] ($(4a)+(0,0.3)$) --($(4b)-(0,0.3)$) {};
\draw[<->,>=stealth',semithick,dashed] ($(5a)+(0,0.3)$) --($(5b)-(0,0.3)$) {};
\draw[<->,>=stealth',semithick,dashed] ($(6a)+(0,0.3)$) --($(6b)-(0,0.3)$) {};
\draw[<->,>=stealth',semithick,dashed] ($(3a)+(0,0.3)$) --($(3b)-(0,0.3)$) {};
			\end{tikzpicture}\end{array}$}

			\\\hline

			\begin{array}{c}
			     \text{I}_{2\ell+3}^{\text{ns}}
			\\
			\\
			\vrule width 0pt height 3ex 
 \widetilde{\text{C}}_{\ell+1}^t   \\
						 \\
						 (\ell\geq 1)
 \end{array}
   &
			\scalebox{.95}{$\begin{array}{c} \begin{tikzpicture}
				\node[draw,circle,thick,scale=1.25,fill=black,label=above:{1}] (1) at (-.2,0){};
				\node[draw,circle,thick,scale=1.25,label=above:{1}] (3) at (.8,0){};
				\node[draw,circle,thick,scale=1.25,label=above:{1}] (4) at (1.8,0){};
				\node[draw,circle,thick,scale=1.25,label=above:{1}] (5) at (2.8,0){};	
				\node[draw,circle,thick,scale=1.25,label=above:{1}] (6) at (3.8,0){};	
				\node[draw,circle,thick,scale=1.25,label=above:{1}] (7) at (4.8,0){};	
				\draw[thick]   (3) to (4);
				\draw[thick] (5) to (6);
				\draw[ultra thick, loosely dotted] (4) to (5) {};
				\draw[thick] (4.,-0.05) --++ (.6,0){};
				\draw[thick] (4.,+0.05) --++ (.6,0){};
								\draw[thick] (-.2,-0.05) --++ (.8,0){};
				\draw[thick] (-.2,+0.05) --++ (.8,0){};

				\draw[thick]
					(4.4,0) --++ (-120:.25)
					(4.4,0) --++ (120:.25);
					\draw[thick]
					(0.2,0) --++ (-60:.25)
					(0.2,0) --++ (60:.25);
			\end{tikzpicture}\end{array}$}

			&
			
				\scalebox{.95}{$\begin{array}{c} \begin{tikzpicture}
				\node[draw,circle,thick,scale=1.25,fill=black,label=above:{1}] (1) at (-.2,0){};	
				\node[draw,circle,thick,scale=1.25,label=below:{1}] (3a) at (.8,-.8){};
				\node[draw,circle,thick,scale=1.25,label=below:{1}] (4a) at (1.8,-.8){};
				\node[draw,circle,thick,scale=1.25,label=below:{1}] (5a) at (2.8,-.8){};	
				\node[draw,circle,thick,scale=1.25,label=below:{1}] (6a) at (3.8,-.8){};	
								\node[draw,circle,thick,scale=1.25,label=above:{1}] (3b) at (.8,.8){};
				\node[draw,circle,thick,scale=1.25,label=above:{1}] (4b) at (1.8,.8){};
				\node[draw,circle,thick,scale=1.25,label=above:{1}] (5b) at (2.8,.8){};	
				\node[draw,circle,thick,scale=1.25,label=above:{1}] (6b) at (3.8,.8){};	

				\node[draw,circle,thick,scale=1.25,label=below:{1}] (7a) at (4.8,-.8){};
				\node[draw,circle,thick,scale=1.25,label=above:{1}] (7b) at (4.8,.8){};	
				\draw[thick]   (4b)--(3b)--(1)--(3a)--(4a);
				\draw[thick]   (5b)--(6b)--(7b)--(7a)--(6a)--(5a);
				\draw[ultra thick, loosely dotted] (4a) to (5a) {};
								\draw[ultra thick, loosely dotted] (4b) to (5b) {};

\draw[<->,>=stealth',semithick,dashed] ($(4a)+(0,0.3)$) --($(4b)-(0,0.3)$) {};
\draw[<->,>=stealth',semithick,dashed] ($(5a)+(0,0.3)$) --($(5b)-(0,0.3)$) {};
\draw[<->,>=stealth',semithick,dashed] ($(6a)+(0,0.3)$) --($(6b)-(0,0.3)$) {};
\draw[<->,>=stealth',semithick,dashed] ($(3a)+(0,0.3)$) --($(3b)-(0,0.3)$) {};
\draw[<->,>=stealth',semithick,dashed] (5.2,-1) arc (-70:70:1) {};
			\end{tikzpicture}\end{array}$}

			\\\hline
			\begin{array}{c}
			\text{IV}^{* \text{ns}}
			\\ 
			\vrule width 0pt height 3ex 
\widetilde{\text{F}}_4^t      
\end{array}
&			

\scalebox{1}{$\begin{array}{c}\begin{tikzpicture}
				\node[draw,circle,thick,scale=1.25,fill=black,label=above:{1}] (1) at (0,0){};
				\node[draw,circle,thick,scale=1.25,label=above:{2}] (2) at (1,0){};
				\node[draw,circle,thick,scale=1.25,label=above:{3}] (3) at (2,0){};
				\node[draw,circle,thick,scale=1.25,label=above:{2}] (4) at (3,0){};
				\node[draw,circle,thick,scale=1.25,label=above:{1}] (5) at (4,0){};
				\draw[thick] (1) to (2) to (3);
				\draw[thick]  (4) to (5);
				\draw[thick] (2.2,0.05) --++ (.6,0);
				\draw[thick] (2.2,-0.05) --++ (.6,0);
				\draw[thick]
					(2.4,0) --++ (60:.25)
					(2.4,0) --++ (-60:.25);
			\end{tikzpicture}\end{array}$} &
			\scalebox{1}{$\begin{array}{c}\begin{tikzpicture}
				\node[draw,circle,thick,scale=1.25,fill=black,label=below:{1}] (0) at (0,0){};
				\node[draw,circle,thick,scale=1.25,label=below:{2}] (1) at (.8,0){};
				\node[draw,circle,thick,scale=1.25,label=below:{3}] (2) at (.8*2,0){};
				\node[draw,circle,thick,scale=1.25,label=below:{2}] (3) at (.8*3,0){};
				\node[draw,circle,thick,scale=1.25,label=below:{1}] (4) at (.8*4,0){};
								\node[draw,circle,thick,scale=1.25,label=left:{2}] (5) at (.8*2,.8*1){};
																\node[draw,circle,thick,scale=1.25,label=above:{1}] (6) at (.8*2,.8*2){};
				\draw[thick] (0)--(1)--(2)--(3)--(4);
				\draw[thick] (2)--(5)--(6);
				\draw[<->,>=stealth',semithick,dashed]  (2.5,0.3) arc (25:65:1.3cm);
				\draw[<->,>=stealth',semithick,dashed]  (3.3,0.3) arc (25:65:3cm);

			\end{tikzpicture}\end{array}$}
		
			\\\hline
			\begin{array}{c}
			\text{I}^{*\text{ss}}_{0}
			\\
			\widetilde{\text{B}}_3^t
\end{array}
 &			\scalebox{1}{$\begin{array}{c}\begin{tikzpicture}
				\node[draw,circle,thick,scale=1.25,label=below:{1}] (1) at (0,-.5){};
				\node[draw,circle,thick,scale=1.25,label=above:{2}] (2) at (1.3,0){};
				\node[draw,circle,thick,scale=1.25,label=above:{1}] (3) at (2.6,0){};
								\node[draw,circle,thick,scale=1.25,label=above:{1}, fill=black] (4) at (0,.5){};
				\draw[thick] (1) to (2);\draw[thick] (2) to (4);
				\draw[thick] (1.5,0.09) --++ (.9,0);
				\draw[thick] (1.5,-0.09) --++ (.9,0);
				\draw[thick]
					(1.9,0) --++ (60:.25)
					(1.9,0) --++ (-60:.25);
			\end{tikzpicture}\end{array}
		$} & 
		\scalebox{.8}{$\begin{array}{c}\begin{tikzpicture}
				\node[draw,circle,thick,scale=1.25,label=30:{2}] (0) at (0,0){};
				\node[draw,circle,thick,scale=1.25,label=below:{1}, fill=black] (1) at (-1,0){};
				\node[draw,circle,thick,scale=1.25,label=below:{1}] (2) at (1,0){};
				\node[draw,circle,thick,scale=1.25,label=above:{1}] (3) at (90:1){};      								\node[draw,circle,thick,scale=1.25,label=below:{1}] (4) at (90:-1){};
				\draw[thick] (0) to (1);
				\draw[thick] (0) to (2);
				\draw[thick] (0) to (3);
				\draw[thick] (0) to (4);
								\draw[<->,>=stealth',semithick,dashed]  (1.1,0.3) arc (25:65:1.7cm);
			\end{tikzpicture}\end{array}
		$}

				\\\hline

			\begin{array}{c}
			\text{I}^{*\text{ns}}_{0}
			\\
			\widetilde{\text{G}}_2^t
\end{array}
 &			\scalebox{1}{$\begin{array}{c}\begin{tikzpicture}
				\node[draw,circle,thick,scale=1.25,label=above:{1}, fill=black] (1) at (0,0){};
				\node[draw,circle,thick,scale=1.25,label=above:{2}] (2) at (1.3,0){};
				\node[draw,circle,thick,scale=1.25,label=above:{1}] (3) at (2.6,0){};
				\draw[thick] (1) to (2);
				\draw[thick] (1.5,0.09) --++ (.9,0);
				\draw[thick] (1.5,-0.09) --++ (.9,0);
				\draw[thick] (1.5,0) --++ (.9,0);
				\draw[thick]
					(1.9,0) --++ (60:.25)
					(1.9,0) --++ (-60:.25);
			\end{tikzpicture}\end{array}
		$} & 
		\scalebox{.8}{$\begin{array}{c}\begin{tikzpicture}
				\node[draw,circle,thick,scale=1.25,label=30:{2}] (0) at (0,0){};
				\node[draw,circle,thick,scale=1.25,label=above:{1}, fill=black] (1) at (-1,0){};
				\node[draw,circle,thick,scale=1.25,label=right:{1}] (2) at (1,0){};
				\node[draw,circle,thick,scale=1.25,label=above:{1}] (3) at (90:1){};
								\node[draw,circle,thick,scale=1.25,label=below:{1}] (4) at (90:-1){};
				\draw[thick] (0) to (1);
				\draw[thick] (0) to (2);
				\draw[thick] (0) to (3);
				\draw[thick] (0) to (4);
				\draw[<->,>=stealth',semithick,dashed]  (1.1,0.3) arc (25:65:1.7cm);
				\draw[<->,>=stealth',semithick,dashed]  (1.1,-0.3) arc (-25:-65:1.7cm);
			\end{tikzpicture}\end{array}
		$}

				\\\hline

				\end{array}$
	\end{center}
	\caption{
{Dual graphs for singular fibers elliptic fibrations with non-geometrically irreducible fiber components.}  \label{Table:DualGraph}
}
\end{table}

\clearpage

\subsection{Crepant resolution}\label{Sec:Blowup}

A Weierstrass model over a base $B$ is defined in the ambient space 
\begin{equation}
X_0=\mathbb{P}(\mathscr{O}_B\oplus \mathscr{L}^{\otimes 2}\oplus \mathscr{L}^{\otimes 3}),
\end{equation}  
for a choice of a line bundle $\mathscr{L}$ over $B$ called the fundamental line bundle of the Weierstrass model. 
  Each crepant resolution is an embedded resolution defined by a sequence of blowups with smooth centers. 
We denote the blowup $X_{i+1}\to X_i$ along the ideal $(f_1,f_2,\ldots,f_n)$ with exceptional divisor $E$ as:
$$\begin{tikzcd}[column sep=2.4cm]X_i \arrow[leftarrow]{r} {\displaystyle (f_1,\ldots, f_n|E)}  & X_{i+1}\end{tikzcd},$$
where $X_0$ is the projective bundle in which the Weierstrass model is defined. 

 An important property of generic USp($2n$)-model is that they have at most one crepant resolution. 
 A crepant  resolution of the USp($4$)-model  is given by the following sequence of blowups of $X_0$ with smooth centers: 
\begin{align}
		\begin{array}{c}
			\begin{tikzpicture}
				\node(E0) at (0,0) {$X_0$};
				\node(E1) at (4,0) {$X_1$};
				\node(E2) at (8,0) {$X_2$};
				\draw[big arrow] (E1) -- node[above,midway]{ $(x,y,s|e_1)$}  (E0);
				\draw[big arrow] (E2) -- node[above,midway]{ $(x,y,e_1|e_2)$}  (E1);
			\end{tikzpicture}
		\end{array}
		\end{align}
		where we abuse notation and write $(x,y,s)$ for the proper transform of $(x,y,s)$ after the first blowup and 
		we write ($x,y,e_1)$ for the proper transform of ($x,y,e_1)$ after the second blowup. 
 The proper transform of the elliptic fibration is:
\begin{equation}
Y: \quad zy^2=e_1 e_2^2 x^3 + a_2 x^2 z+ a_4 e_1 s^2 xz^2 + a_6 e_1^2 s^4z^3.
\end{equation}
The relative projective coordinates are:
\begin{equation}\label{proj.cord3}
[e_1 e_2^2 x:e_1 e_2^2y:z] [e_2x:e_2y:s][x:y:e_1].
\end{equation}
Working in patches, it is easy to check that the two blowups produce a smooth variety by the Jacobian criterion. 
The resolution is crepant because at each blowup we factor two exceptional divisors after blowing up an regular sequence of length three.

\subsection{Fiber degenerations and the geometry of the fibral divisors}
After the two blowups defining the crepant resolution, the total transform of $S=V(s)$ is 
\begin{equation}
f^* s= s e_1 e_2 
\end{equation}
The corresponding fibral divisors are:
\begin{align}
\begin{cases}
D_0: &\quad  s=zy^2-x^2(e_1 e_2^2 x - a_2 z)=0,\\
D_1: &\quad e_1= y^2-a_2 x^2 =0,\\
D_2: &\quad e_2=y^2- a_2 x^2 -a_4 e_1 s^2 xz -a_6 e_1^2 s^4z^2=0.
\end{cases}
\end{align}
We denote respectively  by  $C_0$, $C_1$, and $C_2$ the generic fiber of the fibral divisors $D_0$, $D_1$, and $D_2$. 
By looking carefully at their defining equations, we see that the fibers of $D_0$, $D_1$, and $D_2$ are  always rational curves and never jump in dimension. Thus, the elliptic fibration is equidimensional and this implies that the fibration is also  flat since we work over an algebraically closed field. 

 We will now study carefully the geometry of the generic curves $C_a$ and the fibral divisors $D_a$ ($a=0,1,2$). 
The curve $C_0$ is a rational curve that can be parametrized by $y/x$. More precisely, $C_0$ is the normalization of a nodal curve. The divisor $D_0$ is a $\mathbb{P}^1$-bundle over $S$ since $C_0$ does not degenerate. Thus, we have
	\begin{align}
		D_0\cong \mathbb{P}_S[\mathscr{O}_S\oplus \mathscr{L}].
	\end{align}

 To understand the geometry of $D_1$, we use the Stein factorization as in \cite{G2,F4}. 
 The fibral divisor $D_1$ is a double cover of a variety  $\widetilde{D}_1$  with branch locus $V(a_2)$  such that   $\widetilde{D}_1$ has connected fibers over $S$. More precisely,  $\widetilde{D}_1$ is a  $\mathbb{P}^1$-bundle $ \mathbb{P}_S[\mathscr{L}\oplus \mathscr{O}_S]$ over $S$:
\begin{align}
 D_1 \overset{2:1}{\longrightarrow}   \widetilde{D}_1\cong \mathbb{P}_S[\mathscr{L}\oplus \mathscr{O}_S]\longrightarrow S.
\end{align}
The fiber  $C_1$ is given by a conic that is geometrically degenerated.  Geometrically, the curve $C_1$ is composed of two distinct rational curves that coincide into a double curve over $V(a_2)\cap S$. 
Each of these rational curves defines a $\mathbb{P}^1$-bundle isomorphic to $ \mathbb{P}_S[\mathscr{L}\oplus \mathscr{O}_S]$. 
The divisor $D_1$ is obtained by gluing two such bundles along their fibers over $V(a_2)\cap S$.
In the total space of the line bundle $\mathscr{L}$ over $S$, $D_1$ can be written as the double cover $V(t^2 -a_2)$ where $t=y/x$ is a section of $\mathscr{L}$ restricted on $S$.

The fibral divisor $D_2$ is a  conic bundle with matrix (with respect to the variables $(y, x, e_1 s^2 z)$):
\begin{equation}
M=
\begin{pmatrix}
1 & 0 & 0 \\
0 & -a_2 & -\frac{a_{4,2}}{2},\\
0 & -\frac{a_{4,2}}{2} & -a_{6,4}.
\end{pmatrix},
\end{equation}
whose discriminant locus is: 
\begin{equation}
V(\text{det}\, M) = V(a_4^2-4 a_2 a_6).
\end{equation}
The degeneration of the generic fiber $C_2$  of $D_2$ is controlled by the branch locus of $D_2$ and the rank of the matrix $M$. The curve $C_2$ is a smooth conic if $M$ has rank 3, $C_2$ degenerates into two lines meeting transversally at a point when 
the rank of $M$ is 2, and finally, $C_2$ degenerates into a double line when the rank of $M$ is one:
\begin{equation}
\text{rk}\  M=\begin{cases}
 1 \quad \iff \quad a_2=a_{4,2}=a_{6,4}=0, \\
 2 \quad\iff \quad  a_{4,2}^2-4 a_2 a_{6,4}=0 \quad \text{and} \quad  (a_2,a_{4,2},a_{6,4})\neq (0,0,0), \\
 3 \quad \iff \quad  a_{4,2}^2-4 a_2 a_{6,4}\neq 0.
\end{cases}
\end{equation}
These are also the irreducible components of the intersection of the two components of the discriminant locus. 
It follows that the generic fiber of $D_2$ can only degenerate over $V(S)\cap V(a_2)$, $V(S)\cap V( a_4^2-4 a_2 a_6)$, and 
$V(S)\cap V( a_2, a_4)$. 
The degeneration of $C_1$ and $C_2$ are described in section \ref{sec:deg}, see also Figure
\ref{Fig.V1}, Figure \ref{Fig.V3}, and Figure \ref{Fig.V2}.

\section{Intersection numbers, topological and characteristic invariants }
In this section we collect several important data from the intersection ring of a USp($4$)-model. 
First, we compute the intersection numbers of vertical rational curves with the fibral divisors;
these numbers are understood as the Dynkin labels of the weights of certain representations. 
We then compute the generating function for the Euler characteristic of a USp($4$)-model.  

\subsection{Representations from intersection numbers}\label{sec:deg}
Over $V(S)\cap V(a_2)$, the component 
$C_1$ of the fiber over the generic point of V(S) 
collapses into a double rational curve ($C_1\to  2 C_1'$):
\begin{align}
V(a_2)\Longrightarrow
\begin{cases}
C_0\to C_0 \\
 C_1\to 2 C'_1\\
 C_2 \to C_2 
 \end{cases}
\end{align}
The fiber irreducible components of the fiber over the generic point of 
 $V(S)\cap V(a_2)$ 
are 
\begin{align}
\begin{cases}
C_0: &\quad  s=zy^2-e_1 e_2^2 x^3 - a_2 x^2z=0\\
C'_1 :&\quad e_1= y=0\\
C_2 :&\quad e_2=y^2-a_4 e_1 s^2 x z -a_6 e_1^2 s^4z^2=0
\end{cases}
\end{align}

The weight  of the new curve $C_1'$ with respect to $(D_0, D_1, D_2)$ is half of the weight of $C_1$, namely $[0,-1,2]$. 
Ignoring the contribution from $D_0$, this weight  is \boxed{-1, 2} and its Weyl orbit consists of the non-zero weights of the representation $\mathbf{5}$ of C$_2$.  
The representation  $\mathbf{5}$ of C$_2$ is quasi-minuscule and real.
\begin{equation}
a_2=0\Longrightarrow [0,-1,2]\Longrightarrow \boxed{-1,2}\Longrightarrow \mathbf{5}.
\end{equation}

Over $V(S)\cap V(a_4^2-4 a_2 a_6)$, the conic defining $C_2$ degenerates to two intersecting lines.  
\begin{align}
V(a_4^2-4 a_2 a_6)\Longrightarrow
\begin{cases}
C_0\to C_0 \\
 C_1\to  C_1\\
 C_2 \to C_2^{-}+C_2^{+} 
 \end{cases}
\end{align}
But the two lines require to take the square root of $a_2$. 
That means that the two curves are non-split  just as $C_1$ and the singular fiber is of type  I$_5^{\text{ns}}$. 
Both geometric irreducible components of $C_2$ have weight $[0,1,-1]$ with respect to $(D_0,D_1,D_2)$. 

The Weyl orbit of $[1,-1]$ is the set of weights of the representation $\mathbf{4}$ of C$_2$. 
The representation $\mathbf{4}$ is  minuscule and 
  pseudo-real.
\begin{equation}
a_4^2-4 a_2 a_6=0\Longrightarrow 
[0,1,-1]\Longrightarrow \boxed{1,-1}\Longrightarrow \mathbf{4}.
\end{equation}
Finally, over $V(S)\cap V(a_2, a_4)$, $C_1$ collapses to a double curve while the conic $C_2$ splits geometrically into two curves. 
Again the components of $C_2$ are only visible after a field extension, this time the field extension requires taking the square root of $a_6$.

\begin{rem}
If the base is a surface, the fibers over codimension two points are always geometric since we are over closed points. In particular, over $V(a_4^2-4 a_2 a_6)$,  the fiber is of type I$_5$ and we can compute the weights of $C_{2}^\pm$ individually. 
They produce the representation $\mathbf{4}$. 
If the base is three dimensional or higher, over 
  $V(a_4^2-4 a_2 a_6)$, the fiber is of type I$_5^{\text{ns}}$. While $C_2$ geometrically splits into two curves $C_2^\pm$, these two curves are non-split and do not produce new weights.
  In particular, we do not get the representation $\mathbf{4}$ when the base has dimension $3$ or higher. 
\end{rem}

We have 
	\begin{equation}
\varphi_* \big(D_a\cdot C_b \Big)=
\begin{tabular}{r}
$\begin{matrix}
 C_0 & C_1  & C_2
\end{matrix}\   \  $\\
$\begin{matrix}
D_0 \\
D_1\\
D_2
\end{matrix}
\begin{bmatrix}
-2 & \   \  2 &\  \  0 \\
\   \  2 & -4 &\  \   2\\
\  \   0&\   \    2& -2
\end{bmatrix} 
$
\end{tabular},
	\end{equation}
	which  corresponds to the affine twisted Dynkin diagram $\tilde{C}_2^t$.
	The weights of all the curves are 
	\begin{equation}
\begin{array}{|r|r|r||r|r|r|r|r|}
\cline{2-8}
\multicolumn{1}{c|}{} & C_0& C_1 & C_2& C_1'& C_1^\pm & C_2^\pm & C_2' \\
\cline{1-8}
D_0 &-2 & 2 &0 & 1& 1& 0&0\\ 
D_1&2    &  -4& 2& -2&-2 & 1&1\\ 
 D_2 & 0   & 2 & -2& 1&1 &-1 &-1\\ 
   \cline{1-8}
\end{array}
\end{equation}
	
	\subsection{Euler characteristic and Hodge numbers}\label{Sec:Euler}
\begin{thm}\label{Thm:Euler}
The Euler characteristic of a \textup{USp($4$)} model over a base of dimension $d$  is the component of degree $d$ of the following rational function in the Chow ring of the base:
\begin{equation}
\chi(Y)=4\int_B   \frac{3 L (1 + 2 L) + 6 L (1 + 2 L) S - (5 + 8 L) S^2 }{(1 + 2 L) (1 + 6 L - 4 S) (1 + S)}
 c(TB).
\end{equation}
where  $c(TB)$ is the Chern class of the tangent bundle of $B$, $L=c_1(\mathscr{L})$, and $S$ is the class of the divisor $V(s)$.
\end{thm}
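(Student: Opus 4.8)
The plan is to compute $\chi(Y)$ as an integral over the base by pushing forward the top Chern class through the tower of blowups, following the method of \cite{Euler}. Since $Y$ is smooth, $\chi(Y)=\int_Y c(TY)$ (the degree-$\dim Y$ part), and writing $Y$ as a hypersurface in the smooth ambient space $X_2$, adjunction gives $c(TY)=\big(c(TX_2)/(1+[Y])\big)\big|_Y$. By the projection formula this yields
\begin{equation}
\chi(Y)=\int_{X_2}\frac{[Y]}{1+[Y]}\,c(TX_2).
\end{equation}
The strategy is then to express both $c(TX_2)$ and $[Y]$ in terms of the natural generators of the Chow ring of $X_2$ — the hyperplane class $\zeta=c_1(\mathcal{O}(1))$ of the projective bundle $X_0$, the pullbacks $L=c_1(\mathscr{L})$ and $S=[V(s)]$, and the two exceptional classes $e_1=[E_1]$, $e_2=[E_2]$ — and then to push the whole integrand down to $B$.

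First I would assemble $c(TX_2)$. The total Chern class of $X_0=\mathbb{P}(\mathscr{O}_B\oplus\mathscr{L}^{\otimes 2}\oplus\mathscr{L}^{\otimes 3})$ follows from the relative Euler sequence, giving $c(TX_0)=\pi^*c(TB)\prod_{n\in\{0,2,3\}}(1+\zeta+nL)$. Each of the two blowups is along a smooth complete-intersection center, so $c(TX_{i+1})$ is obtained from $f_{i+1}^*c(TX_i)$ by the standard blowup correction supported on the exceptional divisor; iterating twice produces $c(TX_2)$ as an explicit polynomial in $\zeta,L,S,e_1,e_2$. In parallel I would track $[Y]$: the Weierstrass hypersurface has class $3\zeta+6L$ in $X_0$, and its proper transform under the two blowups subtracts the appropriate multiples of $e_1$ and $e_2$ (the multiplicities being fixed by the length-three regular sequences that are blown up and by crepancy), yielding $[Y]$ in $X_2$.

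The heart of the computation is the iterated pushforward $X_2\to X_1\to X_0\to B$. Here I would apply the pushforward formulas of \cite{Euler} for a blowup along a complete-intersection center, which express $f_*(\text{polynomial in }e_i)$ as explicit rational expressions in the classes cutting out the center; applying these first to eliminate $e_2$ and then $e_1$ reduces the integrand to a class on $X_0$. The final pushforward $\pi_*\colon A^*(X_0)\to A^*(B)$ is the Segre-class pushforward of the projective bundle, $\pi_*(\zeta^k)=s_{k-2}(\mathscr{O}\oplus\mathscr{L}^{2}\oplus\mathscr{L}^{3})$. After clearing denominators and simplifying, the three factors $(1+2L)$, $(1+6L-4S)$, and $(1+S)$ in the stated answer should emerge respectively from the projective-bundle Segre series and from the two exceptional-divisor pushforwards (the class $S$ entering because $s$ is one of the generators of the first center). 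Keeping the full rational functions rather than truncating to a fixed dimension is exactly what makes the result a single generating function whose degree-$d$ part computes $\chi$ over any base of dimension $d$.

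The main obstacle I anticipate is the bookkeeping of the two blowups, whose centers are nested: the second center $(x,y,e_1)$ involves the first exceptional divisor, so the pullback and proper-transform rules for the coordinate classes $x,y,s,e_1$ must be applied in the correct order and the multiplicities in $[Y]$ read off carefully. Getting these transformation rules and the exceptional-divisor pushforwards exactly right — rather than the individual Chern-class or Segre-class computations, which are routine — is where errors are most likely to creep in, and it is the step I would check most carefully, for instance by verifying the formula against a low-dimensional base where $\chi$ can be computed directly.
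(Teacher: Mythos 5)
Your proposal is correct and is essentially the paper's own approach: the paper proves this theorem simply by citing \cite[Table 7]{Euler}, and the computation behind that table is exactly the one you outline---adjunction for the hypersurface $Y\subset X_2$, the blowup Chern-class and pushforward formulas for complete-intersection centers, and the projective-bundle (Segre-class) pushforward to $B$. Indeed, the same machinery (the classes $Z_i^{(j)}$ of the centers, $[Y]=3H+6L-2E_1-2E_2$, and the formulas \eqref{eq:Int1}--\eqref{eq:Int3}) is spelled out in this paper's proof of the triple intersection theorem, so your reconstruction matches the intended argument.
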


\begin{proof}
See \cite[Table 7]{Euler}.
\end{proof}
\begin{rem}
This theorem generalized results from [Grassi-Morrison] for Calabi-Yau threefolds while removing most of the conditions. 
In particular, we do not assume the Calabi-Yau condition, and we do not restrict the dimension of the base. 
\end{rem}
\begin{lem}\label{Lem:Hodge}
If the USp($4$)-model $Y\to B$ is a threefold and $c_1=c_1(TB)$ is the first Chern class of the base, we have
\begin{equation}
\chi(Y)=
4 (3 c_1 L - 18 L^2 + 15 L S - 5 S^2).
\end{equation}
If $Y$ is a Calabi-Yau threefold then the adjuction formula gives $L=c_1(TB)$ and:
\begin{equation}
\chi(Y)=-20 (3 c_1^2 - 3 c_1 S + S^2).
\end{equation}
For a smooth base with canonical class $K$ and a smooth curve  $S$ of genus $g$ and self-intersection $S^2$:
\begin{equation}
\chi(Y)=-60 K^2 +40 S^2 +120(1-g)= -60 K^2 + 60 n  (1-g) +4 (7- n)n S^2, 
\end{equation}
A USp(4) model that is a Calabi-Yau threefold has Hodge numbers: 
      \begin{align}
   h^{1,1}(Y) & =13-K^2, \quad h^{1,2}(Y)=-47 + 60 g + 29 K^2 - 20 S^2      \end{align}
\end{lem}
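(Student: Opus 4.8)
The plan is to derive all four displayed formulas as successive specializations of Theorem~\ref{Thm:Euler}, capped by a short Hodge-theoretic computation for the last line. For the threefold formula I would set $d=2$ (a surface base gives an elliptic threefold) and extract the degree-two component of the integrand. Writing $c(TB)=1+c_1+c_2$ and expanding the rational function as a power series in $L$ and $S$, the crucial observation is that the numerator $3L(1+2L)+6L(1+2L)S-(5+8L)S^2$ vanishes to order one, with lowest term $3L$; hence the quotient has no degree-zero part and the $c_2$ contribution drops out entirely, explaining why only $c_1$ survives. Expanding the denominator as $(1+2L)(1+6L-4S)(1+S)=1+(8L-3S)+(12L^2-4S^2)+\cdots$, inverting to the needed order and multiplying by the numerator, one reads off that the degree-one part of the quotient is $3L$ and the degree-two part is $-18L^2+15LS-5S^2$. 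Pairing the degree-one part with $c_1$ and adding the degree-two part gives $3c_1L-18L^2+15LS-5S^2$, and the overall factor $4$ yields the claimed expression.

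The Calabi--Yau formula is then immediate: imposing $L=c_1(TB)$ collapses $3c_1L-18L^2$ to $-15c_1^2$, producing $-20(3c_1^2-3c_1 S+S^2)$. To obtain the third line I would write $c_1=-K$ and invoke the adjunction (genus) formula $2g-2=(K+S)\cdot S$, i.e. $K\cdot S=2g-2-S^2$, to trade the mixed term $c_1\cdot S=-K\cdot S$ for $g$ and $S^2$; this converts $-20(3K^2+3K\cdot S+S^2)$ into $-60K^2+40S^2+120(1-g)$. The alternative form $-60K^2+60n(1-g)+4(7-n)nS^2$ is simply the $n=2$ instance of the general USp($2n$) expression, verified by direct substitution.

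Finally, for the Hodge numbers I would compute $h^{1,1}$ and $h^{1,2}$ separately. For $h^{1,1}(Y)$ the Shioda--Tate--Wazir theorem gives $h^{1,1}(Y)=h^{1,1}(B)+1+\operatorname{rank}(G)+\operatorname{rank}\mathrm{MW}$; since the gauge algebra is $\mathrm{C}_2$ of rank two and the Mordell--Weil group is trivial (as noted after the Weierstrass model), this reads $h^{1,1}(Y)=h^{1,1}(B)+3$. For a Calabi--Yau elliptic threefold the base is a rational surface, so $H^2(B)$ is purely of type $(1,1)$ and Noether's formula forces $h^{1,1}(B)=10-K^2$, whence $h^{1,1}(Y)=13-K^2$. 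The remaining Hodge number follows from the Calabi--Yau relation $\chi(Y)=2\bigl(h^{1,1}(Y)-h^{1,2}(Y)\bigr)$ together with the expression for $\chi(Y)$ just obtained, giving $h^{1,2}(Y)=h^{1,1}(Y)-\tfrac{1}{2}\chi(Y)=-47+60g+29K^2-20S^2$. I expect the main obstacle to be justifying the input to Shioda--Tate--Wazir, namely that the fibral divisors $D_0,D_1,D_2$ contribute exactly $\operatorname{rank}(G)=2$ independent classes beyond the zero section and that there are no extra sections; this rests on the fiber analysis of Section~\ref{Sec:Blowup} and the triviality of the Mordell--Weil group, together with the fact that for a Calabi--Yau threefold $b_2(Y)=h^{1,1}(Y)$ so that the Picard rank computes $h^{1,1}$ directly.
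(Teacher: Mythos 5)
Your proposal is correct and follows the same route as the paper: specialize Theorem \ref{Thm:Euler} to a threefold, impose the Calabi--Yau condition $L=-K$ together with adjunction, and obtain the Hodge numbers from the Shioda--Tate--Wazir formula (the paper delegates this last step to \cite[Theorem 4.9]{Euler}, which is exactly the computation you spell out: $h^{1,1}(B)=10-K^2$ by Noether's formula for a rational base, rank$(G)=2$ with trivial Mordell--Weil group, and $h^{1,2}=h^{1,1}-\tfrac{1}{2}\chi$). Your series expansion of the generating function, including the observation that the vanishing degree-zero term kills the $c_2$ contribution, and all the resulting coefficients check out.
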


\begin{proof}
The Calabi--Yau condition is $L=-K$ and the Hodge numbers follows from the Shioda--Tate--Wazir formula  \cite{Wazir} as used in  \cite[Theorem 4.9]{Euler}.
\end{proof}

\begin{conjec}
For a USp($2n$)-model, we have 
\begin{equation}
\chi(Y)=2\int_B   \frac{6 L (1 + 2 L) + 3 L (1 + 2 L) (2 + n) S - 
 n (1 + 4 L + 2 n + 2 L n) S^2 }{(1 + 2 L) (1 + 6 L - 2n S) (1 + S)}
 c(TB).
\end{equation}
In the Calabi--Yau threefold case, we have 
\begin{equation}
h^{1,1}(Y)  = 11+n -K^2, \quad h^{1,2}(Y)= (29 -30g)n  -11  - 29 K^2  + 2n(7-n)  S^2.
\end{equation}
\end{conjec}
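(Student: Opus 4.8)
The plan is to prove the conjecture by running the same three-step program used for the USp($4$)-model---crepant resolution, pushforward computation of $\chi(Y)$, and Shioda--Tate--Wazir for the Hodge numbers---but now with $n$ kept general, and to use the $n=2$ specialization against Theorem \ref{Thm:Euler} and Lemma \ref{Lem:Hodge} as the primary consistency check on the overall normalization. First I would write the generic Weierstrass model of a USp($2n$)-model,
\begin{equation*}
zy^2 = x^3 + a_2 x^2 z + a_{4,n}\,s^n\, x z^2 + a_{6,2n}\,s^{2n}\, z^3, \qquad \nu(a_2)=0,\quad \nu(\Delta)=2n,
\end{equation*}
and propose the length-$n$ sequence of blowups with smooth centers
\begin{equation*}
(x,y,s\,|\,e_1),\quad (x,y,e_1\,|\,e_2),\quad \ldots,\quad (x,y,e_{n-1}\,|\,e_n),
\end{equation*}
which reduces to the two-blowup resolution of Section \ref{Sec:Blowup} at $n=2$.

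I would then verify smoothness chart by chart via the Jacobian criterion and crepancy by checking that each blowup of a length-three regular sequence factors out two copies of the exceptional divisor, exactly as in the $n=2$ computation. Having done so, I would identify the $n+1$ fibral divisors $D_0,\dots,D_n$, show that the generic fibers never jump in dimension (hence the fibration is flat), and confirm that the matrix $\varphi_*(D_a\cdot C_b)$ is the Cartan matrix of the twisted affine diagram $\widetilde{\text{C}}_n^t$, so that the resolved model genuinely carries the $\mathrm{C}_n$ structure.

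Next, with the resolution fixed, I would compute $\chi(Y)=\int_Y c_{\dim Y}(TY)$ by the pushforward method of \cite{Euler}: starting from the total Chern class of $X_0=\mathbb{P}(\mathscr{O}_B\oplus\mathscr{L}^{\otimes 2}\oplus\mathscr{L}^{\otimes 3})$, I would propagate it through the $n$ blowups using the blowup pushforward formula and then push down to $B$. The factors $(1+2L)$ and $(1+S)$ in the denominator are inherited from the projective bundle and from $S=V(s)$ and are $n$-independent; the entire $n$-dependence should localize in the factor $(1+6L-2nS)$, each blowup contributing an additional shift by $-2S$ (consistent with $-4S$ at $n=2$). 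Assembling the numerator into $6L(1+2L)+3L(1+2L)(2+n)S - n(1+4L+2n+2Ln)S^2$ is the computational core of the argument, and its $n=2$ value must collapse to the numerator of Theorem \ref{Thm:Euler}.

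Finally, for a Calabi--Yau threefold I would impose $L=-K$, extract the degree-three part of the integrand to obtain $\chi(Y)$ as a polynomial in $K^2$, $S^2$ and $g$, and then read off the Hodge numbers from $h^{1,1}(Y)=h^{1,1}(B)+1+n+\operatorname{rk}(\mathrm{MW})$ together with $\chi(Y)=2\bigl(h^{1,1}(Y)-h^{1,2}(Y)\bigr)$; with trivial Mordell--Weil group and $h^{1,1}(B)=10-K^2$ (as used in Lemma \ref{Lem:Hodge}) this gives $h^{1,1}(Y)=11+n-K^2$ immediately, and $h^{1,2}(Y)$ follows by subtraction, to be matched against Lemma \ref{Lem:Hodge} at $n=2$. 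I expect the main obstacle to be establishing the crepant resolution uniformly in $n$: the $n=2$ Jacobian and flatness checks do not extend automatically, and one must control how the high-valuation terms $a_{4,n}s^n$ and $a_{6,2n}s^{2n}$ interact with the chain of exceptional divisors so that no fiber degenerates in dimension and crepancy is preserved at every step. A secondary difficulty is pinning the $S^2$ coefficient $2n(7-n)$ in $h^{1,2}$ from the single tested value $n=2$; I would therefore cross-check the closed form against the anomaly-cancellation conditions of the associated six-dimensional supergravity theory, whose charged spectrum $n_{\mathbf R}$ must reproduce the same $h^{1,2}$.
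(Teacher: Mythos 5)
First, a point of comparison: the paper does not prove this statement. It is stated as a conjecture, reported as ``checked up to $n=11$,'' with a complete proof deferred to other work, so there is no proof in the paper to measure your proposal against. Your outline is indeed the natural generalization of the paper's $n=2$ treatment (resolution, pushforward, Shioda--Tate--Wazir), and it reduces correctly to Section \ref{Sec:Blowup}, Theorem \ref{Thm:Euler} and Lemma \ref{Lem:Hodge} at $n=2$. The problem is that, as written, it defers exactly the two steps that would constitute a proof: (i) that the chain $(x,y,s|e_1),(x,y,e_1|e_2),\ldots,(x,y,e_{n-1}|e_n)$ is a crepant resolution with equidimensional (hence flat) fibers for every $n$, whose intersection matrix realizes $\widetilde{\mathrm{C}}_n^t$; and (ii) the closed-form evaluation of the pushforward as a function of $n$. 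Your statement that the $n$-dependence ``should localize'' in the factor $(1+6L-2nS)$ is an ansatz, not a derivation: one must run an induction on the blowups, applying the analogues of \eqref{eq:Int1}--\eqref{eq:Int3} with the centers' classes kept general in $n$, and resum. As you concede, the single data point $n=2$ cannot pin down coefficients such as $2n(7-n)$, and the 6D anomaly-cancellation cross-check is a physics consistency condition on the charged spectrum, not a computation of $h^{1,2}$ of the resolved space, so it cannot stand in for the missing mathematics.

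Second, there is a concrete inconsistency your own consistency check would expose: the printed $h^{1,2}$ formula in the conjecture is the exact negative of what your procedure yields, i.e.\ it contradicts Lemma \ref{Lem:Hodge}. Extracting the degree-two part of the conjectured integrand gives
\[
\chi(Y)=2\int_B\Bigl(6Lc_1-36L^2+15\,n\,LS-n(1+2n)S^2\Bigr),
\]
which at $n=2$ reproduces Lemma \ref{Lem:Hodge}, and in the Calabi--Yau case ($L=c_1=-K$ plus adjunction) becomes $\chi(Y)=-60K^2+60n(1-g)+4n(7-n)S^2$, again matching the paper. Combining with $h^{1,1}(Y)=11+n-K^2$ (your Shioda--Tate--Wazir step, which is fine granted trivial Mordell--Weil group---itself something to justify for general $n$) gives
\[
h^{1,2}(Y)=h^{1,1}(Y)-\tfrac{1}{2}\chi(Y)=11+(30g-29)n+29K^2-2n(7-n)S^2,
\]
which is $(-1)$ times the printed formula. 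At $n=2$ the printed conjecture reads $47-60g-29K^2+20S^2$ while Lemma \ref{Lem:Hodge} reads $-47+60g+29K^2-20S^2$; positivity settles which sign is correct (e.g.\ $B=\mathbb{P}^2$, $S$ a line, $g=0$: the printed formula gives $-194$). So the matching against Lemma \ref{Lem:Hodge} at $n=2$ that you propose as your anchor would in fact fail for the statement as printed. Carried to completion, your program proves the Euler-characteristic display and a sign-corrected version of the Hodge-number display; a correct write-up must flag this discrepancy rather than assert the match.
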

The conjecture has been checked up to $n=11$.  A complete proof will be presented elsewhere.

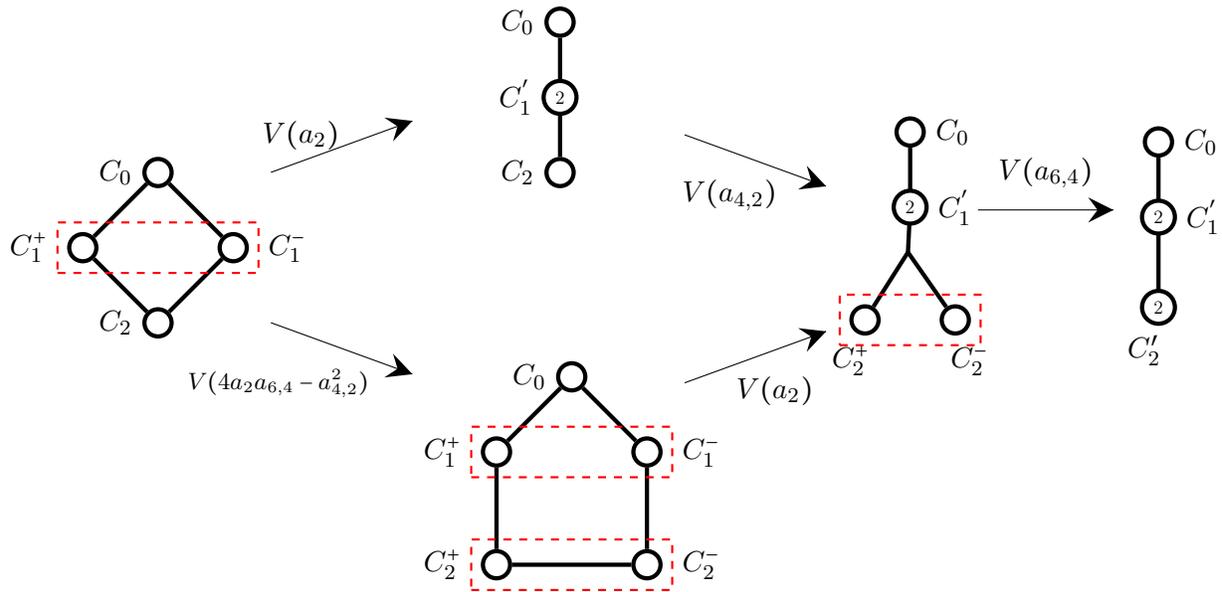
\begin{figure}
\begin{center}
\scalebox{1}{
\begin{tikzpicture}[scale=1]
					\node at (3,2) 
						{
							\begin{tikzpicture}
								\node[draw,ultra thick,circle,xshift=200,label=left:{$C_0$}](C0') at (0,0) {};
								\node[draw,ultra thick,circle,xshift=200,scale=.65,label=left:{$C_1^{'}$}](C13) at (0,-1) {2};
								\node[draw,ultra thick,circle,xshift=200,label=left:{$C_2$}](C2) at (0,-2) {};
								\draw[ultra thick] (C0') -- (C13) -- (C2);
							\end{tikzpicture}
						};
						\tikzset{myptr/.style={decoration={markings,mark=at position 1 with %
    {\arrow[scale=3,>=stealth]{>}}},postaction={decorate}}}
\draw [myptr] (-.5,1) -- +(20:2);
\draw [myptr] (-.5,-1) -- +(-20:2);
\draw [myptr] (5,1.5) -- +(-20:2);
\draw [myptr] (5,-1.8) -- +(20:2);
\draw [myptr] (8.9,.5) -- +(1.8,0);
\node at (-.1,1.5) {$V(a_2)$};
\node at (5.6,.7) {$V(a_{4,2})$};
\node at (6.2,-1.9) {$V(a_{2})$};
\node at (-.4,-1.8) {\footnotesize $V(4a_2 a_{6,4} -a_{4,2}^2)$};
\node at (9.8,1) {$V(a_{6,4})$};

					\node at (-2,0) 
						{
							\begin{tikzpicture}[]
								\node[draw,ultra thick,circle,label=left:{$C_0$}](C0) at (0,0) {};
								\node[draw,ultra thick,circle,label=left:{$C_1^+\  $}](C1+) at (-1,-1) {};
								\node[draw,ultra thick,circle,label=right:{$\ C_1^-$}](C1-) at (1,-1) {};
								\node[draw,ultra thick,circle,label=left:{$C_2$}](C2) at (0,-2) {};;
								\draw[ultra thick] (C0) -- (C1+) -- (C2) -- (C1-) -- (C0);
								\node[draw,color=red,thick,  dashed,fit=(C1+) (C1-)] {};
							\end{tikzpicture}
						};
						
							\node at (3.5,-3) 
						{
							\begin{tikzpicture}[]
								\node[draw,ultra thick,circle,label=left:{$C_0$}](C0) at (0,0) {};
								\node[draw,ultra thick,circle,label=left:{$C_1^+\  $}](C1+) at (-1,-1) {};
								\node[draw,ultra thick,circle,label=right:{$ \ C_1^-$}](C1-) at (1,-1) {};
								\node[draw,ultra thick,circle,label=left:{$C_2^+\   $}](C21) at (-1,-2.5) {};
							  \node[draw,ultra thick,circle,label=right:{$ \   C_2^-$}](C22) at (1,-2.5) {};
								\draw[ultra thick] (C0) -- (C1+) -- (C21)--(C22) -- (C1-) -- (C0);
								\node[draw,color=red,thick,  dashed,fit=(C21) (C22)] {};
								
								\node[draw,color=red,thick,  dashed,fit=(C1+) (C1-)] {};
							\end{tikzpicture}
						};
						\node at (8,0) 
						{
							\begin{tikzpicture}
								\node[draw,ultra thick,circle,xshift=200,label=right:{$C_0$}](C0') at (0,0) {};
								\node[draw,ultra thick,circle,xshift=200,scale=.65,label=right:{$C_1^{'}$}](C13) at (0,-1) {2};
					
								\node[draw,ultra thick,circle,xshift=200,label=below:{$C_2^+\quad$}](C2+) at (-.6,-2.5) {};
								\node[draw,ultra thick,circle,xshift=200,label=below:{$\quad C_2^-$}](C2-) at (.6,-2.5) {};
								\draw[ultra thick] (C0') -- (C13) -- (7,-1.6);
								\draw[ultra thick] (C2-) -- (7,-1.6) -- (C2+);
								
								\node[draw,color=red,thick,  dashed,fit=(C2+) (C2-)] {};
							\end{tikzpicture}
						};

						\node at (11.5,0) 
						{
							\begin{tikzpicture}
								\node[draw,ultra thick,circle,xshift=200,label=right:{$C_0$}](C0') at (0,0) {};
								\node[draw,ultra thick,circle,xshift=200,scale=.65,label=right:{$C_1^{'}$}](C13) at (0,-1) {2};
					
								\node[draw,ultra thick,circle,xshift=200,scale=.65,label=below:{$C'_2\quad$}](C2) at (0,-2.2) {2};
								\draw[ultra thick] (C0') -- (C13) -- (C2);
								
							\end{tikzpicture}
						};
						\end{tikzpicture}
						}
					\end{center}
						\caption{Fiber degeneration of  a  USp(4)-model with $v(a_4)=2$ and $v(a_6)=4$. 	\label{Fig.V1}} 
						\end{figure}

	\begin{figure}
	\begin{center}
\begin{tikzpicture}[scale=1]
					\node at (3,2) 
						{
							\begin{tikzpicture}
								\node[draw,ultra thick,circle,xshift=200,label=left:{$C_0$}](C0') at (0,0) {};
								\node[draw,ultra thick,circle,xshift=200,scale=.65,label=left:{$C_1^{'}$}](C13) at (0,-1) {2};
								\node[draw,ultra thick,circle,xshift=200,label=left:{$C_2$}](C2) at (0,-2) {};
								\draw[ultra thick] (C0') -- (C13) -- (C2);
							\end{tikzpicture}
						};
						\tikzset{myptr/.style={decoration={markings,mark=at position 1 with %
    {\arrow[scale=3,>=stealth]{>}}},postaction={decorate}}}
\draw [myptr] (-.5,1) -- +(20:2);
\draw [myptr] (-.5,-1) -- +(-20:2);
\draw [myptr] (5,1.5) -- +(-20:2);
\draw [myptr] (5,-1.8) -- +(20:2);
\node at (-.1,1.5) {$V(a_2)$};
\node at (5.6,.7) {$V(a_{4,2})$};
\node at (6.2,-1.9) {$V(a_{2})$};
\node at (-.1,-1.7) {$V(a_{4,2})$};

					\node at (-2,0) 
						{
							\begin{tikzpicture}[]
								\node[draw,ultra thick,circle,label=left:{$C_0$}](C0) at (0,0) {};
								\node[draw,ultra thick,circle,label=left:{$C_1^+\  $}](C1+) at (-1,-1) {};
								\node[draw,ultra thick,circle,label=right:{$\ C_1^-$}](C1-) at (1,-1) {};
								\node[draw,ultra thick,circle,label=left:{$C_2$}](C2) at (0,-2) {};;
								\draw[ultra thick] (C0) -- (C1+) -- (C2) -- (C1-) -- (C0);
								\node[draw,color=red,thick,  dashed,fit=(C1+) (C1-)] {};
							\end{tikzpicture}
						};
						
							\node at (3.5,-3) 
						{
							\begin{tikzpicture}[]
								\node[draw,ultra thick,circle,label=left:{$C_0$}](C0) at (0,0) {};
								\node[draw,ultra thick,circle,label=left:{$C_1^+\  $}](C1+) at (-1,-1) {};
								\node[draw,ultra thick,circle,label=right:{$ \ C_1^-$}](C1-) at (1,-1) {};
								\node[draw,ultra thick,circle,label=left:{$C_2^+\   $}](C21) at (-1,-2.5) {};
							  \node[draw,ultra thick,circle,label=right:{$ \   C_2^-$}](C22) at (1,-2.5) {};
								\draw[ultra thick] (C0) -- (C1+) -- (C21)--(C22) -- (C1-) -- (C0);
								\node[draw,color=red,thick,  dashed,fit=(C21) (C22)] {};
								
								\node[draw,color=red,thick,  dashed,fit=(C1+) (C1-)] {};
							\end{tikzpicture}
						};
						\node at (8,0) 
						{
							\begin{tikzpicture}
								\node[draw,ultra thick,circle,xshift=200,label=right:{$C_0$}](C0') at (0,0) {};
								\node[draw,ultra thick,circle,xshift=200,scale=.65,label=right:{$C_1^{'}$}](C13) at (0,-1) {2};
					
								\node[draw,ultra thick,circle,xshift=200,scale=.65,label=below:{$C'_2\quad$}](C2) at (0,-2.2) {2};
								\draw[ultra thick] (C0') -- (C13) -- (C2);
								
							\end{tikzpicture}
						};

						\end{tikzpicture}
						\caption{Fiber degeneration of  a  USp(4)-model with $v(a_4)=2$ and $v(a_6)\geq 5$. 	
						\label{Fig.V3}
											} 
						\end{center}
						\end{figure}

	\begin{figure}
	\begin{center}
	\scalebox{1}{
\begin{tikzpicture}[scale=1]
					\node at (3,2) 
						{

\begin{tikzpicture}
								\node[draw,ultra thick,circle,xshift=200,label=right:{$C_0$}](C0') at (0,0) {};
								\node[draw,ultra thick,circle,xshift=200,scale=.65,label=right:{$C_1^{'}$}](C13) at (0,-1) {2};
					
								\node[draw,ultra thick,circle,xshift=200,label=below:{$C_2^+\quad$}](C2+) at (-.6,-2.5) {};
								\node[draw,ultra thick,circle,xshift=200,label=below:{$\quad C_2^-$}](C2-) at (.6,-2.5) {};
								\draw[ultra thick] (C0') -- (C13) -- (7,-1.6);
								\draw[ultra thick] (C2-) -- (7,-1.6) -- (C2+);
								
								\node[draw,color=red,thick,  dashed,fit=(C2+) (C2-)] {};
							\end{tikzpicture}
						};
						\tikzset{myptr/.style={decoration={markings,mark=at position 1 with %
    {\arrow[scale=3,>=stealth]{>}}},postaction={decorate}}}
\draw [myptr] (-.5,1) -- +(20:2);
\draw [myptr] (-.5,-1) -- +(-20:2);
\draw [myptr] (5,1.5) -- +(-20:2);
\draw [myptr] (5,-1.8) -- +(20:2);
\node at (-.1,1.5) {$V(a_2)$};
\node at (5.6,.7) {$V(a_{6,4})$};
\node at (6.2,-1.9) {$V(a_{2})$};
\node at (-.1,-1.7) {$V(a_{6,4})$};

					\node at (-2,0) 
						{
							\begin{tikzpicture}[]
								\node[draw,ultra thick,circle,label=left:{$C_0$}](C0) at (0,0) {};
								\node[draw,ultra thick,circle,label=left:{$C_1^+\  $}](C1+) at (-1,-1) {};
								\node[draw,ultra thick,circle,label=right:{$\ C_1^-$}](C1-) at (1,-1) {};
								\node[draw,ultra thick,circle,label=left:{$C_2$}](C2) at (0,-2) {};;
								\draw[ultra thick] (C0) -- (C1+) -- (C2) -- (C1-) -- (C0);
								\node[draw,color=red,thick,  dashed,fit=(C1+) (C1-)] {};
							\end{tikzpicture}
						};
						
							\node at (3.5,-3) 
						{
							\begin{tikzpicture}[]
								\node[draw,ultra thick,circle,label=left:{$C_0$}](C0) at (0,0) {};
								\node[draw,ultra thick,circle,label=left:{$C_1^+\  $}](C1+) at (-1,-1) {};
								\node[draw,ultra thick,circle,label=right:{$ \ C_1^-$}](C1-) at (1,-1) {};
								\node[draw,ultra thick,circle,label=left:{$C_2^+\   $}](C21) at (-1,-2.5) {};
							  \node[draw,ultra thick,circle,label=right:{$ \   C_2^-$}](C22) at (1,-2.5) {};
								\draw[ultra thick] (C0) -- (C1+) -- (C21)--(C22) -- (C1-) -- (C0);
								\node[draw,color=red,thick,  dashed,fit=(C21) (C22)] {};
								
								\node[draw,color=red,thick,  dashed,fit=(C1+) (C1-)] {};
							\end{tikzpicture}
						};
						\node at (8,0) 
						{

							\begin{tikzpicture}
								\node[draw,ultra thick,circle,xshift=200,label=right:{$C_0$}](C0') at (0,0) {};
								\node[draw,ultra thick,circle,xshift=200,scale=.65,label=right:{$C_1^{'}$}](C13) at (0,-1) {2};
					
								\node[draw,ultra thick,circle,xshift=200,scale=.65,label=below:{$C'_2\quad$}](C2) at (0,-2.2) {2};
								\draw[ultra thick] (C0') -- (C13) -- (C2);
								
							\end{tikzpicture}
						};

						\end{tikzpicture}}
						\caption{Fiber degeneration of  a  USp(4)-model with $v(a_4)\geq 3$ and $v(a_6)=4$. \label{Fig.V2}}
						\end{center}
						\end{figure}

	\subsection{Triple intersection numbers}

\begin{thm}
The triple intersection polynomial of the fibral divisors is $6\mathcal{F}=(D_0\alpha_0+D_1\alpha_1+D_2\alpha_2)^3$ with 						
\begin{equation}				
\begin{aligned}
6 {\mathcal F}&=
-4 S (-L + S) \alpha_0^3 - 6 (2 L - S) S \alpha_0^2 \alpha_1 + 
 12 L S \alpha_0 \alpha_1^2  \\
 & \quad  - 8 S^2 \alpha_1^3 -
 12 (L - S) S \alpha_1^2 \alpha_2 - 
 6 S (-2 L + S) \alpha_1 \alpha_2^2 - 4 L S \alpha_2^3
\end{aligned}			
\end{equation}

\end{thm}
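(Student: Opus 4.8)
The plan is to realize the triple intersection polynomial as a single pushforward to the base $B$ and to evaluate it with the projection formula together with the standard blowup and projective-bundle pushforward formulas. First I would record the relevant classes in the ambient spaces. Write $X_0=\mathbb{P}(\mathscr{O}_B\oplus\mathscr{L}^{\otimes2}\oplus\mathscr{L}^{\otimes3})$ with projection $\pi_0$ and relative hyperplane class $H=c_1(\mathscr{O}(1))$, subject to the defining relation of the Chow ring of the projective bundle, $H(H+2L)(H+3L)=0$, where $L=c_1(\mathscr{L})$. Denote by $f_1\colon X_1\to X_0$ the blowup of $V(x,y,s)$ with exceptional divisor $E_1$, by $f_2\colon X_2\to X_1$ the blowup of $V(x,y,e_1)$ with exceptional divisor $E_2$, and set $\pi=\pi_0\circ f_1\circ f_2\colon X_2\to B$. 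Tracking the proper transforms of $s,e_1,e_2$ and of the Weierstrass section (each picking up multiplicity $2$, exactly as required for crepancy, cf. Section \ref{Sec:Blowup}) yields $[Y]=3H+6L-2E_1-2E_2$ and, for the fibral divisors restricted to $Y$, $D_0=S-E_1$, $D_1=E_1-E_2$, $D_2=E_2$. Writing $j\colon Y\hookrightarrow X_2$ and using $j_*j^*=(\,\cdot\,)[Y]$, the identity $6\mathcal F=\varphi_*\big((\alpha_0D_0+\alpha_1D_1+\alpha_2D_2)^3\big)$ becomes the ambient computation
\[
6\mathcal F=\pi_*\!\Big([Y]\cdot\big(\alpha_0(S-E_1)+\alpha_1(E_1-E_2)+\alpha_2E_2\big)^3\Big).
\]

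Then I would expand the cube, producing monomials in $S,E_1,E_2$ with coefficients polynomial in the $\alpha_a$, and push them forward in the order $\pi_{0*}\circ f_{1*}\circ f_{2*}$. For each blowup the exceptional divisor satisfies $E|_E=-\xi$ with $\xi$ the relative hyperplane class of $E=\mathbb{P}(N_Z)$, so that $f_*(E^k)=(-1)^{k-1}(\iota_Z)_*\,s_{k-3}(N_Z)$, where $N_Z=\bigoplus_i\mathscr{O}(Z_i)|_Z$ is the normal bundle of the codimension-three complete-intersection center and $s_\bullet$ are Segre classes. The center of $f_2$ is cut by the classes $H+2L-E_1$, $H+3L-E_1$, $E_1$, and the center of $f_1$ by $H+2L$, $H+3L$, $S$; combined with the projection formula $f_*(f^*\beta\cdot E^k)=\beta\cdot f_*(E^k)$, this eliminates first $E_2$ and then $E_1$, leaving a polynomial in $H,L,S$ on $X_0$. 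Finally, pushing forward along the projective bundle via $\pi_{0*}(H^{k})=s_{k-2}(\mathscr{O}\oplus\mathscr{L}^{\otimes2}\oplus\mathscr{L}^{\otimes3})$, equivalently by repeatedly reducing with $H^3=-5LH^2-6L^2H$ until only $\pi_{0*}H^2=1$ survives, converts each $\alpha$-monomial's coefficient into a degree-two class in $L$ and $S$. Reading off the coefficients of $\alpha_0^3,\alpha_0^2\alpha_1,\dots,\alpha_2^3$ gives the stated expression. The absence of every monomial containing both $\alpha_0$ and $\alpha_2$ is a built-in consistency check, reflecting $D_0\cdot D_2=0$, as is the requirement that the Hessian $\partial^2(6\mathcal F)/\partial\alpha_a\partial\alpha_b$ restrict on a generic fiber to a multiple of the affine Cartan matrix $\widetilde{\text{C}}_2^t$ computed in Section \ref{sec:deg}.

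The main obstacle is the bookkeeping of the two nested exceptional divisors: the second center $V(x,y,e_1)$ is not transverse to the first exceptional locus but lies inside $E_1$, so the Segre-class contributions of $N_{Z_2}$ genuinely involve $E_1$, and one must keep both the order of the pushforwards ($f_{2*}$ before $f_{1*}$) and the proper-transform corrections to the classes of $x$ and $y$ straight. The remaining steps—expanding the cube, applying the projection formula, and reducing powers of $H$ through the projective-bundle relation—are purely mechanical. Rather than carry this out from scratch, I would invoke the pushforward theorem of \cite{Euler}, whose hypotheses are satisfied here, and specialize its output to the classes $[Y]=3H+6L-2E_1-2E_2$ and $D_0,D_1,D_2$ recorded above.
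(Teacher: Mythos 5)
Your proposal is correct and follows essentially the same route as the paper's proof: both realize $6\mathcal{F}$ as the ambient pushforward $\pi_*f_{1*}f_{2*}\big(([D_0]\alpha_0+[D_1]\alpha_1+[D_2]\alpha_2)^3\,[Y]\big)$ with $[Y]=3H+6L-2E_1-2E_2$, $[D_0]=S-E_1$, $[D_1]=E_1-E_2$, $[D_2]=E_2$, and evaluate it using the same center classes $(H+2L,H+3L,S)$ and $(H+2L-E_1,H+3L-E_1,E_1)$ together with the blowup and projective-bundle pushforward formulas of \cite{Euler}. Your Segre-class formulation $f_*(E^k)=(-1)^{k-1}(\iota_Z)_*s_{k-3}(N_Z)$ and the reduction $H^3=-5LH^2-6L^2H$ reproduce exactly the paper's explicit identities $f_*E=f_*E^2=0$, $f_*E^3=Z_1Z_2Z_3$, $f_*E^4=(Z_1+Z_2+Z_3)Z_1Z_2Z_3$ and $\pi_*H^2=1$, $\pi_*H^3=-5L$, $\pi_*H^4=19L^2$, so the two computations coincide step for step.
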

\begin{proof}
The classes of the three divisors are
$$
[D_0]=f_2^* f_1^*\pi^* S-f_2^* E_1, \quad [D_1]=f_2^*E_1-E_2, \quad [D_2]=E_2
$$
where $E_i$ is the class of the exceptional divisor of the $i$th blowup $f_i$. The class of the Weierstrass equation before the blowup is $(3H+6L)$ and after the two blowups it becomes 
$(3H+6L-2E_1-2E_2)$. 
The rest follows from the pushforward formula
$$ 6\mathcal{F}=\int_B \pi_* f_{1*} f_{2*} \Big( ([D_0]\alpha_0 + [D_1]\alpha_1 +[D_2]\alpha_2)^3 (3f_2^*f_1^* H +6 f_2^*f_1^*\pi^*L-2 f_2^* E_1-2E_2)\Big).$$
To perform the pushforward, we use the following data from the centers of the blowups: 
\begin{equation}
\begin{array}{lll}
Z_1^{(1)} = H+3\pi^* L  \quad &,\quad Z_2^{(1)} =H+2\pi^* L  \quad &,\quad Z_3^{(1)}=\pi^* S,\\
Z_1^{(2)}=f_1^* H+3f_1^* \pi^*L-E_1 \quad &, \quad  Z_2^{(2)}=f_1^* H+2f_1^*\pi^*L-E_1 \quad&, \quad Z_3^{(3)}=f_1^* E_1.
\end{array}
\end{equation}
The triple intersection numbers involve at most quartic monomials. 
The pushforward after a blowup of the complete intersection ($Z_1, Z_2,Z_3)$ with exceptional divisor $E$ is \cite{Euler}
\begin{equation}\label{eq:Int1}
f_*E = 0,\quad  f_*E^2= 0, \quad f_*E^3= Z_1Z_2Z_3, \quad f_*E^4= (Z_1 + Z_2+Z_3)Z_1Z_2Z_3,
\end{equation}
and the pull-push  formula for any proper map $g$: 
\begin{equation}\label{eq:Int2}
g_*(\alpha\cdot  g^*   \beta)=g_* \alpha\cdot \beta,
\end{equation}
and 
\begin{equation}  \label{eq:Int3}
\pi_* H=0, \quad \pi_* H^2 =1, \quad \pi_* H^3=-5L, \quad \pi^* H^4=19 L^2.
\end{equation}
 The equations 
  \eqref{eq:Int1}, \eqref{eq:Int2}, and \eqref{eq:Int3}
 are enough to compute the triple intersection numbers. 

\end{proof}

\begin{lem}
In the case where the resolved elliptic fibration $Y$ is a Calabi-Yau threefold, with $S$ a smooth curve of genus $g$: 
					\begin{equation}				
\begin{aligned}
6 {\mathcal F}&=
8(1-g) \alpha_0^3 - 6 (4  -4 g+S^2) \alpha_0^2 \alpha_1 + 
 12(2-2g +S^2) \alpha_0 \alpha_1^2  \\
 & \quad  - 8 S^2 \alpha_1^3 -
 24 (1 - g)  \alpha_1^2 \alpha_2 - 
 6 (-4+4 g-S^2) \alpha_1 \alpha_2^2 - 4 (2-2g+S^2)  \alpha_2^3.
\end{aligned}			
\end{equation}
\end{lem}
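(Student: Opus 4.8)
The plan is to specialize the general triple intersection polynomial of the preceding theorem to the Calabi--Yau threefold setting by evaluating its coefficients as intersection numbers on the base. Since $Y\to B$ is a threefold, the base $B$ is a surface, so each coefficient appearing in $6\mathcal{F}$ — each of which is quadratic in the divisor classes $L$ and $S$ — is a genuine intersection number once pushed forward to a point. The only inputs I therefore need are the intersection numbers $L^2$, $L\cdot S$, and $S^2$ on $B$, together with the relations forced by the Calabi--Yau and smoothness hypotheses.

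First I would invoke the Calabi--Yau condition. As recorded in Lemma \ref{Lem:Hodge}, adjunction for the elliptic fibration forces $L=-K$, where $K=K_B$ is the canonical class of the base. Next, since $S$ is a smooth curve of genus $g$ on the smooth surface $B$, the adjunction formula for curves on surfaces gives $2g-2 = S\cdot(S+K) = S^2 + K\cdot S$, hence $K\cdot S = 2g-2-S^2$. Combining these two facts yields the single key identity
\begin{equation}
L\cdot S = -K\cdot S = 2 - 2g + S^2 ,
\end{equation}
which is the only substitution needed: the self-intersection $S^2$ is already a number on the surface and is left untouched, while the term $L^2$ never appears in any of the seven coefficients.

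Then I would substitute this identity into each coefficient of the general formula and simplify. For instance, the $\alpha_0^3$ coefficient $-4S(-L+S)=4\,L\cdot S-4S^2$ collapses to $4(2-2g+S^2)-4S^2 = 8(1-g)$; the $\alpha_1^2\alpha_2$ coefficient $-12(L-S)S=-12\,L\cdot S+12S^2$ collapses to $-24(1-g)$; and the purely self-intersection coefficient $-8S^2$ of $\alpha_1^3$ is unchanged. The remaining four coefficients are handled identically, each time trading the mixed product $L\cdot S$ for $2-2g+S^2$ and leaving any isolated $S^2$ as is.

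I do not expect a genuine conceptual obstacle: the argument is a mechanical specialization, and the identity $L\cdot S=2-2g+S^2$ does all the work. The only point requiring care is bookkeeping — tracking signs correctly through the seven coefficients and confirming that no unreduced $L^2$ survives. This is guaranteed by the structure of the general formula, where every coefficient carries an overall factor of $S$, so every occurrence of $L$ is automatically paired with $S$ and thus eliminated by the substitution.
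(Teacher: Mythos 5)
Your proposal is correct and follows exactly the paper's own proof: the Calabi--Yau condition $L=-K$ plus adjunction on the curve $S$ give $L\cdot S = 2-2g+S^2$, which is substituted into the general triple intersection polynomial (the paper's proof is this same observation stated in one sentence). Your additional remark that every coefficient carries an overall factor of $S$, so no $L^2$ term can survive, is accurate and makes the mechanical nature of the specialization explicit.
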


\begin{proof}
The Calabi--Yau condition gives $L=-K$ and the adjunction gives $2-2g=-K\cdot S-S^2$, which we can use to eliminate appearances of $K\cdot S=-L\cdot S$ . 
\end{proof}

\section{Compactifications of M-theory to 5D on a USp($4$)-model}

M-theory compactified on a Calabi-Yau threefold results in a five-dimensional supersymmetric gauge theory with eight supercharges that we denote 
${\cal N}=1$ five-dimensional supergravity \cite{Cadavid:1995bk,Ferrara:1996wv,IMS} and its matter content consists of  a gravitational multiplet, n$_T$ tensor multiplets, n$_V$ vector multiplets, and n$_H$ hypermultiplets. 
In five dimensional spacetime, a massless tensor  multiplets is dual to a massless vector. 
We assume that all tensors are massless and are dualized to vectors.  
Each vector multiplet contain a real scalar field $\phi$ and each hypermultiplet contains four real fields forming  a quaternion. 
 The kinetic terms of all the vector multiplets and the graviphoton as well as the Chern-Simons terms are determined by a real function of the scalar fields called the prepotential.

 In the Coulomb branch of an $\mathcal{N}=1$ supergravity theory in five dimension, the Lie group is completely broken to a Cartan subgroup.  This implies that the charge of a hypermultiplet is simply a  
 weight of the representation under which it transforms \cite{IMS}.  
The Intrilligator-Morrison-Seiberg (IMS) prepotential is the quantum contribution to the prepotential of a five-dimensional gauge theory after integrating out all massive fields. 

Let $\phi$ be in the Cartan subalgebra of a Lie algebra $\mathfrak{g}$. We denote by  $\mathbf{R}_i$ the representations under which the hypermultiplets transform. 
The  weights are in the dual space of the Cartan subalgebra.  We denote the evaluation of a  weight on a coroot vector $\phi$ as a scalar product $\langle \mu,\phi \rangle$.  Denoting the roots by $\alpha$ and the weights of $\mathbf{R}_i$ by $\varpi$ we have \cite{IMS}
\begin{align}
6\mathscr{F}_{\text{IMS}}(\phi) =&\frac{1}{2} \left(
\sum_{\alpha} |\langle \alpha, \phi \rangle|^3-\sum_{i} \sum_{\varpi\in \mathbf{R}_i} n_{\mathbf{R}_i} |\langle \varpi, \phi\rangle|^3 
\right).\label{Eq:IMS}
\end{align}
For all simple groups with the exception of SU$(N)$ with $N\geq 3$, this is the full purely cubic sector of the prepotential as there are no non-trivial third order Casimir invariants.

\subsection{5D prepotential and counting charged hypermultiplets}
\begin{thm}
The IMS prepotential for the Lie algebra USp(4) with $n_{\bf 4}$ hypermultiplets in the fundamental representation, $n_{\bf 5}$ in the traceless antisymmetric  representation,   and $n_{\bf 10}$ in the adjoint representation is 
\begin{equation}	
\begin{aligned}
6\mathcal{F}_{\text{IMS}}= &-(8 n_{\bf 10} + n_{\bf 4} - 8) \phi_2^3  - 
 8(n_{\bf 10} +  n_{\bf 5} - 1) \phi_1^3 \\
&-3(4 n_{\bf 10}+ n_{\bf 4} - 4 n_{\bf 5} - 4) \phi_1^2 \phi_2  + 
 3(6 n_{\bf 10}+ n_{\bf 4} - 2 n_{\bf 5} - 6)  \phi_1 \phi_2^2. 
\end{aligned}
\end{equation}
\end{thm}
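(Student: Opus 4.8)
The plan is to evaluate the IMS prepotential \eqref{Eq:IMS} directly on the root system of $C_2$ together with the weights of the three representations $\mathbf{4}$, $\mathbf{5}$, and $\mathbf{10}$. First I would fix an orthonormal basis $(e_1,e_2)$ in which the short simple root is $\alpha_1=e_1-e_2$ and the long simple root is $\alpha_2=2e_2$, so that the eight roots of $C_2$ are $\pm 2e_1,\pm 2e_2$ (long) and $\pm e_1\pm e_2$ (short). In this basis the weights of the fundamental $\mathbf{4}$ are $\pm e_1,\pm e_2$; the weights of the traceless antisymmetric $\mathbf{5}$ are the four short roots $\pm e_1\pm e_2$ together with a zero weight; and the weights of the adjoint $\mathbf{10}$ are the eight roots together with two zero weights. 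Writing the Coulomb parameter as $\phi=\phi_1\alpha_1^\vee+\phi_2\alpha_2^\vee$, the pairing $\langle\mu,\phi\rangle$ of any weight $\mu$ with $\phi$ is the dot product of its Dynkin labels with $(\phi_1,\phi_2)$; equivalently, setting $\phi=x\,e_1+y\,e_2$ one has $x=\phi_1$ and $y=\phi_2-\phi_1$.

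The key observation for organizing the sum is that the adjoint contributes twice: once in the gauge (vector-multiplet) term $\tfrac12\sum_\alpha|\langle\alpha,\phi\rangle|^3$ and once as hypermultiplet matter with multiplicity $n_{\mathbf{10}}$. Since the zero weights drop out, the adjoint matter term is simply $-\tfrac12 n_{\mathbf{10}}\sum_\alpha|\langle\alpha,\phi\rangle|^3$, so the two combine into a prefactor $(1-n_{\mathbf{10}})$ multiplying the root sum. I would then restrict to the fundamental Weyl chamber, which in these coordinates is $x>y>0$, in order to resolve every absolute value; this makes each $|\langle\mu,\phi\rangle|^3$ an honest cube and reduces the whole computation to polynomial algebra.

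Next I would assemble the three sums over $\mathbf{4}$, $\mathbf{5}$, and the roots in the $(x,y)$ variables, using the elementary identity $(x+y)^3+(x-y)^3=2x^3+6xy^2$ to collapse the short-root contributions. This yields
\begin{equation}
6\mathscr{F}_{\text{IMS}}=A\,x^3+B\,y^3+C\,xy^2,
\end{equation}
with $A=10-10n_{\mathbf{10}}-2n_{\mathbf{5}}-n_{\mathbf{4}}$, $B=8-8n_{\mathbf{10}}-n_{\mathbf{4}}$, and $C=6-6n_{\mathbf{10}}-6n_{\mathbf{5}}$; notably no $x^2y$ term survives. Finally I would substitute $x=\phi_1$, $y=\phi_2-\phi_1$ and expand, reading off the coefficients of $\phi_1^3$, $\phi_2^3$, $\phi_1^2\phi_2$, and $\phi_1\phi_2^2$ as $A-B+C$, $B$, $3B-2C$, and $-3B+C$ respectively; inserting the values of $A,B,C$ reproduces the four coefficients in the statement.

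I expect the main obstacle to be organizational rather than conceptual: keeping the absolute values consistent with a single Weyl chamber and correctly accounting for the double role of the adjoint. A secondary subtlety is the convention relating $(\phi_1,\phi_2)$ to the simple coroots---the stated form requires $\alpha_1$ to be the short root and $\alpha_2$ the long one, so that the $\phi_1^3$ coefficient picks up $n_{\mathbf{5}}$ (from the short-root/antisymmetric weights) rather than $n_{\mathbf{4}}$; choosing the opposite labeling would merely swap $\phi_1\leftrightarrow\phi_2$. Once the chamber and labeling are fixed, the remaining steps are a mechanical cubic expansion.
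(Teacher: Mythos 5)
Your proposal is correct: all of your intermediate values check out (the root sum $20x^3+16y^3+12xy^2$, the coefficients $A,B,C$, and the change of variables $x=\phi_1$, $y=\phi_2-\phi_1$ reproduce exactly the four coefficients $-(8n_{\bf 10}+n_{\bf 4}-8)$, $-8(n_{\bf 10}+n_{\bf 5}-1)$, $-3(4n_{\bf 10}+n_{\bf 4}-4n_{\bf 5}-4)$, and $3(6n_{\bf 10}+n_{\bf 4}-2n_{\bf 5}-6)$ of the theorem). The paper gives no explicit proof of this theorem, only the setup you adopt (simple roots $[2,-1]$ and $[-2,2]$, the coroot basis, and the dual Weyl chamber $\phi_2-\phi_1>0$, $2\phi_1-\phi_2>0$), so your direct evaluation of the IMS formula on the weights of $\mathbf{4}$, $\mathbf{5}$, and $\mathbf{10}$ is precisely the intended argument and fills in the omitted computation.
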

			
	The simple roots are $[2,-1]$ and $[-2,2]$ in the basis of fundamental weights. 
	We take $(\phi_1,\phi_2)$  in the basis of coroots. The dual open Weyl chamber is defined by 
	\begin{equation}
 \varphi_2 - \varphi_1>0, \quad   2\varphi_1-\varphi_2>0.
	\end{equation}				
	The two polynomial $\mathcal{F}_{\text{IMS}}$ and $\mathcal{F}$  match for $\alpha_0 =0, \alpha_1\leftrightarrow \phi_2$, $\alpha_2\leftrightarrow \phi_1$ and : 
	\begin{equation}
	n_{\bf 5}= S^2+1 - n_{\mathbf{10}}, \quad n_{\bf 4} = 4(S^2+4 - 2 n_{\mathbf{ 10}}- 2 g).
	\end{equation}	
	We see in particular that $n_{\mathbf{ 10}}$ is not fixed by the matching of the cubic polynomials but is only restricted by inequalities: 
		\begin{equation}
0\leq  n_{\mathbf{10}} \leq 2 - g +\frac{1}{2}S^2, \quad 0\leq n_{\mathbf{10}}\leq  1+S^2,	\end{equation}
The assumption $n_{\bf{10}}\geq 0$ and $n_{\bf{5}}\geq 0$ gives a  lower bound on  the self-intersection $S^2$:
\begin{equation}
-1 \leq S^2.
\end{equation}
Following Witten's quantization argument, the number of hypermultiplets transforming in the adjoint representation is given by the genus $g$ of the curve $S$, and we have 
\begin{equation}
n_{\bf{10}}=g, \quad n_{\bf{5}}=S^2+1-g, \quad n_{\bf{4}}= 4S^2 + 16(1-g). 
\end{equation}

  \section{F-theory  on a Calabi-Yau threefold  
and anomaly  cancellation}
In this section, we prove that the Calabi--Yau threefold USp($4$)-model engineers anomaly free 6D $\mathcal{N}=(1,0)$ supergravity theories. 

\subsection{Generalized Green-Schwarz mechanism in F-theory}

We assume that $Y$ is  a simply-connected elliptically fibered Calabi-Yau threefold with holonomy SU($3$). The SU($3$)  holonomy and the simply-connected  condition forces  $B$ to be a rational surface.  
The low energy effective description of F-theory compactified on an elliptically fibered Calabi-Yau threefold $Y$ with a base $B$ is six-dimensional $\mathcal{N}=(1,0)$ supergravity coupled to 
$n_{\text T}=h^{1,1}(B)-1$ tensor multiplets, $n_{\text H}^0=h^{2,1}(Y)+1$ neutral hypermultiplets, and  $n_{\text V}^{(6)}$ massless vector multiplets such that  $n_{\text V}^{(6)}+n_{\text T}=h^{1,1}(Y)-2$ \cite{Morrison:1996pp}.

Consider a 6D ${\cal N}=(1,0)$ supergravity theory with $n_{\text T}$ tensor multiplets, $n_{\text V}^{(6)}$ vector multiplets, and $n_{\text H}$ hypermultiplets. 
We assume that the gauge group is simple, so that 
\begin{equation}
n^{(6)}_V= \dim G. 
\end{equation}

We distinguish between the numbers $n_{\text H}^0$ of neutral hypermultiplets and n$_{\mathbf{R_i}}$ of both charged and neutral hypermultiplets transforming in the representation $\mathbf{R}_i$ of the gauge group. 
CPT invariance requires that $\textbf{R}_i$ is a quaternionic representation \cite{Schwarz:1995zw}. When $\bf{R}_i$ is pseudo-real, we can have half-hypermultiplets transforming under $\bf{R}_i$, which can give half-integer values for $n_{\bf{R}_i}$. 
We also count as neutral any hypermultiplet whose  charge is given by the zero weight of a representation. 
We denote by $\dim{\mathbf R}_{i,0}$ the number of zero weights in the representation ${\mathbf R}_{i}$. 
The total number of charged hypermultiplets is then \cite{GM1}
\begin{equation}
n_{\text H}^{\text{ch}}=\sum_i (\dim \mathbf{R}_i -\dim{\mathbf R}_{i,0}) n_{\mathbf{R_i}}.
\end{equation}
The total number of hypermultiplets is  $n_{\text H}=n_{\text H}^0+n_{\text H}^{\text{ch}}$.  
The pure gravitational anomaly  is cancelled by the vanishing of the coefficient of $\mathrm{tr}\  R^4$ in the anomaly polynomial \cite[Footnote 3]{Salam}:
\begin{equation}
\label{eqn:grav}
n_{\text H}-n_{\text V}^{(6)}+29 n_{\text T}-273=0.
\end{equation}
 Using the duality between F-theory on an elliptically fibered Calabi-Yau threefold with base $B$ and type IIB on $B$, Noether's formula implies the following for the number of tensor multiplets \cite{Sadov:1996zm}:
\begin{equation}
n_{\text T}=h^{1,1}(B)-1=9 - K^2.
\end{equation}
The remaining part of the anomaly polynomial is \cite{Schwarz:1995zw}:
 \begin{equation}
\mathcal  I_8= \frac{9-n_{\text T}}{8} (\mathrm{tr} \   R^2)^2+\frac{1}{6}\  X^{(2)} \mathrm{tr}\ R^2-\frac{2}{3} X^{(4)},
 \end{equation}
where 
\begin{align}
X^{(n)}=\mathrm{tr}_{\mathbf{adj}}\  F^n -\sum_{i}n_{\mathbf R_i} \mathrm{tr}_{\mathbf R_i}\  F^n.
\end{align}
 For a  reference representation $\mathbf F$,  the trace identities for a representation $\mathbf{R}_{i}$ of a simple group $G$ are
\begin{equation}
\tr_{\bf{R}_{i}} F^2=A_{\bf{R}_{i}} \tr_{\bf{F}} F^2 , \quad \tr_{\bf{R}_{i}} F^4=B_{\bf{R}_{i}} \tr_{\bf{F}} F^4+C_{\bf{R}_{i}} (\tr_{\bf{F}} F^2)^2
\end{equation}
with respect to a reference representation $\bf{F}$ for each simple component $G$ of the gauge group.\footnote{We denote this representation by $\bf{F}$ as we have chosen the fundamental representation(s) for convenience. However, any representation can be used as a reference representation.} The coefficients $A_{\bf{R}_{i}}$, $B_{\bf{R}_{i}}$, and $C_{\bf{R}_{i}}$ depends on the gauge groups and are listed in \cite{Erler,Avramis:2005hc,vanRitbergen:1998pn}. We have 
\begin{align}
X^{(2)}&=\Big(A_{\textbf{adj}}   -\sum_{i}n_{\mathbf R_i} A_{\mathbf R_i}\Big) \mathrm{tr}_{\mathbf F}  F^2\\
X^{(4)}&=\Big(B_{\textbf{adj}}   -\sum_{i}n_{\mathbf R_i} B_{\mathbf R_i}\Big) \mathrm{tr}_{\mathbf F}  F^4 +
\Big(C_{\textbf{adj}}   -\sum_{i}n_{\mathbf R_i} C_{\mathbf R_i}\Big)( \mathrm{tr}_{\mathbf F} F^2)^2
.
\end{align}
  If  $G$  does not have two independent quartic Casimir invariants, we take $B_{\bf{R}_{i}}=0$ \cite{Sadov:1996zm}.
If the simple group $G$ is supported on a divisor $S$ and $K$ is the canonical class of the base of the elliptic fibration,
we can  factor $\mathcal{I}_8$ as a perfect square following Sadov's analysis \cite{Sadov:1996zm,Sagnotti:1992qw}: 
 \begin{align}
\mathcal I_8 &= \frac{1}{2}(\frac{1}{2} K\mathrm{tr} \   R^2-\frac{2}{\lambda} S\ \mathrm{tr}_{\mathbf F} F)^2.
 \end{align}
Sadov showed this factorization matches the general expression of $\mathcal  I_8$  if and only the following anomaly cancellation conditions hold  \cite{Sadov:1996zm} (see also \cite{GM1,Park}):
\begin{subequations}\label{eq:AnomalyEqn}
\begin{align}
n_{\text H}-n_{\text V}^{(6)}+29n_{\text T}-273 &=0,\label{Grav.an}\\
n_{\text T}&=9-K^2 , \\
\left(B_{\bf{adj}}-\sum_{i}n_{\bf{R}_{i}}B_{\bf{R}_{i}}\right)& = 0, \label{B.an}\\
\lambda \left(A_{\bf{adj}}-\sum_{i}n_{\bf{R}_{i}}A_{\bf{R}_{i}}\right) & =6  K S, \\
\lambda^2 \left(C_{\bf{adj}}-\sum_{i}n_{\bf{R}_{i}}C_{\bf{R}_{i}}\right) & =-3 S^2.
\end{align}
\end{subequations}
The coefficient  $\lambda$ is a  normalization factor  chosen such that the  smallest topological charge of an embedded SU($2$) instanton in $G$ is one \cite{Kumar:2010ru, Park, Bernard}. This forces $\lambda$ to be the Dynkin index of the fundamental representation of  $G$ as summarized in Table \ref{tb:normalization} \cite{Park}. 

Using adjunction ($KS+S^2=2g-2$), the last two anomaly equations give an expression for the genus of $S$:
\begin{equation}\label{Witten.an}
\lambda \left(A_{\bf{adj}}-\sum_{i}n_{\bf{R}_{i}}A_{\bf{R}_{i}}\right) -2\lambda^2 \left(C_{\bf{adj}}-\sum_{i}n_{\bf{R}_{i}}C_{\bf{R}_{i}}\right) =12  (g-1).
\end{equation}

\begin{table}[htb]
\begin{center}
\begin{tabular}{|c|c|c|c|c|c|c|c|c|c|}
\hline
 $\mathfrak{g}$ & A$_n$ ($n\geq 1$) & B$_n$  ($n\geq 3$) & C$_n$ ($n\geq 2$) & D$_n$  ($n\geq 4$)& E$_8$ & E$_7$ & E$_6$&  F$_4$ & G$_2$ \\
 \hline
 $\mathbf{F}$ & $\mathbf{n+1}$ & $\mathbf{V}$=$\mathbf{2n+1}$ & $\mathbf{V}$=$\mathbf{2n}$  & $\mathbf{V}$=$\mathbf{2n}$ & $\mathbf{248}$ & $\mathbf{56}$ & $\mathbf{27}$ & $\mathbf{26}$ & $\mathbf{7}$ \\
 \hline 
 $\lambda=I_2(\mathbf{F})$ & $1$ & $2$  & $1$ & $2$ & $60$ & $12$ & $6$ & $6$ & $2$ \\
 \hline  
\end{tabular}
\caption{ The normalization factors for each simple Lie algebra used in the 6D anomaly equations \cite{Kumar:2010ru, Park}. 
The second row identifies the defining representation $\mathbf{F}$ for each algebra; the representation $\mathbf{F}$ is the smallest non-trivial representation of the Lie algebra. 
The coefficient  $\lambda$ is  the second Dynkin index of $\mathbf{F}$. The second Dynkin of fundamental representations of a simple Lie algebra were first computed in  \cite[Table 5]{Dynkin.SubA}. 
\label{tb:normalization}}
\end{center}
\end{table}

\subsection{Anomaly cancellation for USp($2n$) with matter in the adjoint, fundamental, and traceless antisymmetric representation}
Wel now consider the specificity of the USp($2n$)-model.  We will need the following trace identities: 
\begin{table}[hbt]
\begin{center}
$
\begin{array}{|c|c|c|c||c|c|c|}
\hline 
\bf{R} & A_{\bf{R}} & B_{\bf{R}} & C_{\bf{R}} & \dim \bf{R} & \dim \bf{R}_0& \dim \bf{R} -\dim \bf{R}_0 \\
\hline
\bf{F} &       1 &    1  &      0& 2n  & 0 & 2n\\  
\hline
\bf{Adj} &   2n+2  &  2n+8   &  3 & n(2n+1)& n& 2n^2 \\
\hline
{\bf{\Lambda}}^2_0&      2n-2 &   2n-8  &   3  & (n-1)(2n+1) & n-1 & 2 n (n-1)  \\
\hline 
\end{array}
$
\caption{Coefficients for the trace identities in the case of  USp($2n$). The reference representation $\mathbf{F}$ is the  representation $\bf{2n}$. 
 \label{table:SONormalization}  See \cite{PVN,Schwarz:1995zw,Sagnotti:1992qw}. }
\end{center}
\end{table}

We first ignore the condition for the cancellation of the gravitational anomaly, namely equation \eqref{Grav.an}. 
After fixing the conventions for the trace identities and the coefficient $\lambda=1$,  we are left with three linear  equations that have a unique solution:
\begin{equation}
n_{\bf{adj}}=g,\quad n_{\bf{F}}=16(1-g)+2(4-n) S^2,\quad   n_{{\bf{\Lambda}}^2_0}=1-g+S^2.
\end{equation}
We are left with the pure gravitational  anomaly, which requires checking  equation  \eqref{Grav.an}.  Since we have explicit expressions for the number of charged hypermultiplets, this is a straightforward computation.
We then see immediately that  equation \eqref{Grav.an} is also satisfied and  $  \mathcal{I}_8 $ factors as a perfect square:
\begin{align}
  \mathcal{I}_8 &= \frac{9-n_T}{8} (\mathrm{tr} \   R^2)^2+\frac{1}{6}X^{(2)}\ \mathrm{tr}\ R^2\ -\frac{2}{3} X^{(4)}\\
 &=\frac{1}{2} ( \frac{1}{2} K \mathrm{tr}\  R^2+ 2S\ \mathrm{tr}_{\mathbf{F}}\, F^2)^2
   \end{align}
	
	\subsection{Geometry of representation  multiplicities}
	In this section, we study the geometric meaning of the number $n_{\mathbf{10}}$, $n_{\mathbf{5}}$, and $n_{\mathbf{4}}$ computed from triple intersection numbers and anomaly cancellation conditions. 
	The matching of the triple intersection numbers with the prepotential does not restrict $n_{\mathbf{10}}$.
	We get $n_{\bf{10}}=g$ by using  Witten's quantization argument in 5D or by the anomaly cancellation conditions in the 6D theory. 
	
In this subsection, we would like to give a geometric description of $n_{\bf{5}}$ and $n_{\bf{4}}$. 
Using Riemann-Roch, we can express $n_{\mathbf{5}}$ as a difference of homomorphic Euler characteristics.

We denote by $\chi(M,\mathscr{F})$ the holomorphic Euler characteristic of the sheaf $\mathscr{F}$ in the variety $M$. 
 By definition 
 \begin{equation}
 \chi(M,\mathscr{F})=\sum_{i=0}^{\mathrm{dim} M} \mathrm{dim}\ H^i(M, \mathscr{F}).
 \end{equation}
 \begin{thm}
 For  a USp($2n$)-model with a gauge divisor $S$ of genus $g$ in a smooth compact rational surface $B$, we have 
 \begin{equation}
 n_{{\bf{\Lambda}}^2_0}=\chi(B, \mathscr{O}_B(S))-1.
\end{equation}
 \end{thm}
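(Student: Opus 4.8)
The plan is to apply the Riemann--Roch theorem for the surface $B$ to the line bundle $\mathscr{O}_B(S)$ and then simplify using the two structural facts available to us: that $B$ is rational, and that $S$ is a smooth curve of genus $g$ sitting inside $B$. Concretely, for a smooth projective surface the holomorphic Euler characteristic of a divisor $D$ satisfies
\begin{equation}
\chi(B,\mathscr{O}_B(D))=\chi(B,\mathscr{O}_B)+\tfrac{1}{2}D\cdot(D-K),
\end{equation}
where $K$ is the canonical class. Specializing to $D=S$ gives $\chi(B,\mathscr{O}_B(S))=\chi(B,\mathscr{O}_B)+\tfrac{1}{2}(S^2-K\cdot S)$, so the whole computation reduces to evaluating the two terms on the right.

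For the first term I would invoke the hypothesis that $B$ is a smooth compact rational surface. Rationality forces $h^1(B,\mathscr{O}_B)=h^2(B,\mathscr{O}_B)=0$ (the irregularity and geometric genus both vanish), while $h^0(B,\mathscr{O}_B)=1$, so $\chi(B,\mathscr{O}_B)=1$. For the second term I would eliminate $K\cdot S$ using the adjunction formula for the smooth curve $S$, namely $K\cdot S+S^2=2g-2$, which the paper already uses in the anomaly-cancellation analysis. This yields $K\cdot S=2g-2-S^2$, hence $\tfrac{1}{2}(S^2-K\cdot S)=\tfrac{1}{2}\big(S^2-(2g-2-S^2)\big)=S^2-g+1$.

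Assembling the two pieces gives $\chi(B,\mathscr{O}_B(S))=1+(S^2-g+1)=2-g+S^2$, so that $\chi(B,\mathscr{O}_B(S))-1=1-g+S^2$. It then remains only to recognize the right-hand side as the multiplicity $n_{\boldsymbol{\Lambda}^2_0}$ computed earlier in the paper (equal to $1-g+S^2$, which for the USp($4$)-model is the value $n_{\mathbf 5}=1-g+S^2$), which establishes the asserted identity.

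I do not expect a serious obstacle here: the argument is essentially a two-line application of Riemann--Roch plus adjunction, and the only thing to watch is sign bookkeeping in the $\tfrac{1}{2}(S^2-K\cdot S)$ term and the correct use of $\chi(B,\mathscr{O}_B)=1$ for rational $B$. The one conceptual input that must be made explicit is \emph{why} the rationality hypothesis guarantees $\chi(B,\mathscr{O}_B)=1$; this is where the assumption that $B$ is rational (imposed by the SU($3$)-holonomy and simple-connectedness of the Calabi--Yau threefold) does all the work, so I would state it clearly rather than leave it implicit.
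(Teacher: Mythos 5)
Your proof is correct and follows essentially the same route as the paper's: Riemann--Roch on the surface $B$ applied to $\mathscr{O}_B(S)$, adjunction $K\cdot S+S^2=2g-2$ to eliminate $K\cdot S$, the identification $n_{\boldsymbol{\Lambda}^2_0}=1-g+S^2$ from the anomaly-cancellation computation, and $\chi(B,\mathscr{O}_B)=1$ from rationality of $B$. Your version is in fact slightly more careful than the paper's, since you spell out why rationality gives $\chi(B,\mathscr{O}_B)=1$, a point the paper only asserts.
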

 \begin{proof}
The Riemann-Roch theorem states that for a curve $S$ in an algebraic surface $B$: 
\begin{equation}
\chi(B,\mathscr{O}_S)=\frac{1}{2}S(S-K)+\chi(\mathscr{O}_B),
\end{equation}
which can be rewritten as 
\begin{equation}
\chi(B,\mathscr{O}_S)=S^2-\frac{1}{2}S(S+K)+\chi(B,\mathscr{O}_B),
\end{equation}
Since $-S(S+K)=2-2g$, we have 
\begin{equation}
\chi(B,\mathscr{O}_S)=S^2+1-g +\chi(B,\mathscr{O}_B).
\end{equation}
Hence 
\begin{equation}
 n_{{\bf{\Lambda}}^2_0}=S^2+1-g=\chi(B, \mathscr{O}_B(S))-\chi(B,\mathscr{O}_B)=\chi(B, \mathscr{O}_B(S))-1.
\end{equation}	
In the last equality, we use that  $B$ is a rational surface and therefore has  holomorphic Euler characteristic $1$.	
\end{proof}
	\begin{rem}
		If $S'$ is the double cover of $S$ branched at the points $V(S,a_2)$, then  
		 as can be seen by computing the topological Euler characteristic of $S$, we have\footnote{To compute the genus of $S'$, one can first compute its topological Euler characteristic and use the formula $2-2g(S')=\chi(S')$. 
Since $S'$ is a double cover, its   topological Euler characteristic of $S'$ is $\chi(S')=2 \chi(S)-b$ where $b=-2 K \cdot S$ is the number of branched points. 
Using $\chi(S)=-K\cdot S-S^2$, it follows that 
$\chi(S')=-2S^2$ and 
 $g(S')=S^2+1$.}: 
	\begin{equation}
	g(S')=1+S^2.
	\end{equation}
	Hence we can also write $n_{\bf 5}$ as the difference:
	\begin{equation}
 n_{{\bf{\Lambda}}^2_0}=S^2+1-g=g(S')-g(S).
	\end{equation}
The curve $S'$ is defined by 
The divisor $D_1$ which can be described as a conic bundle over $S'$ with only double lines as singular fiber and a discriminant locus $V(a_2)$. 
 		
\end{rem}

	The number $ n_{\mathbf{V}} $  can be written as the intersection number between the curve $S$ and the locus $V(b_{8,2n})$ of class $-(8K+2nS)$:
	\begin{equation}
	 n_{\mathbf{V}} =-(8K+2nS)\cdot S.
	 \end{equation}
	This is in line with the traditional wisdom from classical intersecting  brane models. 
		However,  if $v(a_4)>2$ (resp. $v(a_6)>4$), the locus over which the representation $\mathbf {V}$ is localized is $V(a_{6,2n})\cap S$  (resp. $V(a_{4,n})\cap S$) and the previous interpretation of $n_{\bf V}$ losses its geometric meaning. 
	But a new geometric matching emerges in each case. 	
	\begin{enumerate}
\item	When $v(a_4)>n$, the fiber over $V(a_2)$ has the node $C_2^{\pm}$ both of which has has  weight $[1,-1]$, which is a weight of the  representation $\mathbf{V}$. 
	That means that we have to add the intersection numbers $-2 S\cdot K$ to $(-6K-2nS)\cdot S$ and we retrieve again the interpretation in terms of  intersecting brane literature:
	\begin{equation}
	 n_{\mathbf{V}} =(-2K)\cdot S + (-6K-2nS)\cdot S= (-8K-2nS)\cdot S.
	\end{equation}
	\item
	In the case of $v(a_6)>2n$, the curve $S$ intersects the other components of the discriminant locus over points of multiplicities $2$. If each of these points is counted twice to take into account the scheme structure, we again retrieve the geometric meaning of $n_{\mathbf V}$: 
	\begin{equation}
	 n_{\mathbf{V}} =2\Big( (-4K-nS)\cdot S\Big)= -(8K+2nS)\cdot S.
	\end{equation}
\end{enumerate}

\subsection{An heterotic sanity check}

In the case of a USp($4$)-model, we will check the counting on a model with a heterotic dual. 
	If the curve $S$ has genus $g=0$, then:
	\begin{equation}
	g=0\Longrightarrow n_{\mathbf{10}}=0, \quad n_{\bf 5}=S^2+1, \quad \text{ and} \quad  n_{\bf 4} = 4(S^2+4).
	\end{equation}
	 These values  match the matter content of a  Spin($5$) gauge theory in ${\cal N}=1$ six dimensional supergravity  obtained  by compactifying the E$_8\times$E$_8$  Heterotic string theory on a K3 surface with  instanton numbers $(12+n,12- n)$ where $n=n_{\bf{5}}-1=S^2$.

\subsection{Frozen representations}
In this section, we discuss geometric configurations which, despite containing curves corresponding to weights in a particular representation, are nonetheless forced by anomaly cancellation to have zero hypermultiplets charged under that representation.

For non-simply-laced Lie algebras, the numbers of representations are not given by the numbers of collision points supporting the singular fibers carrying the weights of these representations. It follows that it is possible to have a vanishing number of hypermultiplets carrying a representation despite the fact that the geometry has curves carrying the weights of that representation. In such a case, we say that the representation is frozen.

As an example, we consider a curve $S$ reaching the lower bound for $S^2$, that is, $S$ is a rational curve of self-intersection $-1$, so that
\begin{equation}
(g=0 \quad  \text{and}\quad S^2=-1)\Longrightarrow g=n_{\mathbf{10}}=n_{\mathbf{ 5}}=0,\quad  \text{and} \quad n_{\mathbf{ 4}}=12.
\end{equation} 
It follows that when $S$ is an exceptional rational curve, there is no matter in the adjoint nor in the traceless antisymmetric representation. 
In that case, the fundamental representation has multiplicity $12$. 
Still, on the curve $S$, there are six points over which the I$_4^{\text{ns}}$ fiber degenerates and the degenerated fibers contain vertical rational curves  whose weights are in the representation $\mathbf{5}$. 
However, the anomaly cancellation gives  $n_{\mathbf{ 5}}=0$. We say that the representation  $\mathbf{5}$ is {\em frozen}.

More generally, the representation $\mathbf{5}$ of a USp(4) model is frozen if and only if  the following equivalent conditions holds:
\begin{align}
S^2 &=g-1, \quad 
\chi(B, \mathscr{O}_B(S)) =1, \quad 
\chi(S) =-2K\cdot S, \quad
K\cdot S =g-1, \quad
K\cdot S =S^2.
\end{align}
\begin{rem}
If $S$ is a smooth rational curve, the representation $\mathbf{5}$ is frozen if and only if $S^2=-1$. 
That means that the representations $\bf{10}$ and $\bf{5}$ are both  frozen if and only if $S$ is an exceptional divisor of the base and can be contracted to  a point without introducing a singularity in the base. 
\end{rem}

\begin{rem}
For USp(4), the only representations that can be frozen are the adjoint representation (when $g=0$) and the traceless antisymmetric representation $\Lambda^2_0$ (the $\mathbf{5}$)  when  $S^2=g-1$. 
The representation $\mathbf{4}$ cannot be frozen since $n_{\mathbf{4}}$ vanishes exactly when the locus over which the curves with  weights in the representation $\mathbf{4}$ are absent. 

\end{rem}

The USp($4$)-model illustrates several surprising phenomena in F-theory such as the existence of frozen representations and the role of the dimension of the base when determining representations.

\section{ Katz--Vafa method and  USp($4$)-models}
  The Katz--Vafa method is a heuristic technique to determine the matter representations appearing in presence of an enhancement of a Lie algebra $\mathfrak{h}$ into a Lie algebra $\mathfrak{g}$  using branching rules expressing the decomposition of the  adjoint representation of $\mathfrak{g}$ along irreducible representations of $\mathfrak{h}$.  
  In F-theory, $\mathfrak{h}$ is the Lie algebra determined by the fibers over codimension-one points while $\mathfrak{g}$ is associated with the Dynkin diagram of a singularity enhancement over codimension-two points. 
  A more systematic and reliable method is to compute the weights of the rational curves  appearing in the fiber enhancement to determine the representations they correspond to by using the notion of {\em saturation of weights}.  
Nevertheless, it is instructive to know how and why the Katz--Vafa method fails.

\subsection{Branching rules}

{\em Branching rules} describe how an irreducible representation of a Lie algebra $\mathfrak{g}$ decomposes into irreducible representations of one of a subalgebra $\mathfrak{h}$. 
A subalgebra $\mathfrak{h}$ can be embedded in a Lie algebra $\mathfrak{g}$ in more than one way, which might  lead to inequivalent branching rules.  
Thus, identifying a subalgebras (up to isomorphisms)  $\mathfrak{h}$ of $\mathfrak{g}$ is not enough to determine a branching rule, even if $\mathfrak{h}$ is a maximal subalgebra of $\mathfrak{g}$. 
What is required to determine a branching rule is an embedding of $\mathfrak{h}$ in $\mathfrak{g}$. For example,  Dynkin's classification of maximal subalgebras shows that the Lie algebra of type E$_8$  has three  types of  maximal  A$_1$ subalgebras leading to distinct branching rules.

Branching rules were first introduced in the first part of the 20th century to answer questions in quantum mechanics such as the study of the degeneration levels of the hydrogen atom. Branching rules quickly imposed themselves as fundamental tools in nuclear physics,  particle physics, and condensed matter physics. With the development of gauge theories,  the paradigm of symmetry breaking,  Grand Unified Theories, and the  Kaluza--Klein dimensional reduction via compactification,  branching rules became an irreplaceable instrument in the  toolkit of theoretical and mathematical physicists.

    The first examples of branching rules can be traced back to Hermann Weyl's seminal book on quantum mechanics and representation theory.
     Weyl gave a beautiful complete branching theorem for the restriction of an irreducible representation of $A_{n}$  to a sum of irreducible representations of $A_{n-1}$. Murnaghan described the branching rules for the restriction of the special orthogonal group SO($2n+1$) to its subgroup SO($2n$) and USp($2n+2$) to USp($2n$).
Littlewood and Murnaghan gave the branching rules for the restriction of U($2n$) to USp($2n$). 
  Many applications of branching rules are reviewed by Hamermesh.
Branching rules are collected in tables such as in the book of McKay and Patera \cite{McKayPatera},  the review of Slansky on Grand Unification \cite{Slansky:1981yr} (see also \cite{Yamatsu:2015npn}),  and many can be also explored in computer packages such as Sage and LieArt.  
  \subsection{Branching rules in F-theory}
  
 We denote by $\mathfrak{g}\downarrow \mathfrak{h}$  the restriction of  a Lie algebra $\mathfrak{g}$ to a subalgebra $\mathfrak{h}$. 
 If a gauge theory with gauge algebra $\mathfrak{g}$ and matter (a Higgs field) transforming in the adjoint representation of $\mathfrak{g}$ undergoes a symmetry breaking 
  to a subalgebra $\mathfrak{h}$, Golstone bosons will transform  under irreducible representations $\mathbf{R}_i$ appearing in the branching rule for the  adjoint representation of  $\mathfrak{g}$ under the restriction of $\mathfrak{g}\downarrow \mathfrak{h}$:
  \begin{equation}\label{eq:KV}\mathfrak{g}\downarrow \mathfrak{h}:\quad\quad 
\bf{adj}  (\mathfrak{g})=\bf{adj}  (\mathfrak{h})\bigoplus_i \bf{R}_i.
  \end{equation}
  In F-theory, codimension-one singular fibers determine a gauge algebra  $\mathfrak{h}$ and the codimension-two singularities would give a gauge algebra $\mathfrak{g}$ if they were lifted to divisors by a blowup. In this situation, we say that $\mathfrak{h}$ is  enhanced to $\mathfrak{g}$. Typically, $\mathfrak{h}$ is a subalgebra of $\mathfrak{g}$.

 In this situation, the {\em Katz--Vafa method} proposes that the matter representation consists of the sum of  the irreducible representations  $\mathbf{R}_i$ appearing 
 in equation \eqref{eq:KV} for  the branching rules for the  adjoint representation when the Lie algebra  $\mathfrak{g}$ is reduced to the subalgebra $\mathfrak{h}$. 
  The Katz--Vafa method was   originally considered for the following ten types of rank-one enhancements between ADE Lie algebras: 
    \begin{align}
& A_n \to A_{n-k} \oplus A_{k-1},\\
& D_n \to  D_{n-1}, \   A_{n-1},\  
D_{n-k}\oplus A_{k-1},  \\
& E_6\to D_5,\ A_5,\\
& E_7 \to E_6, \ 
D_6, \    A_6, \\	
& E_8 \to E_7. 		
\end{align}
A detailed analysis of these branching rules are presented in \cite{EK.KV}.

 \subsection{USp($4$)-models as kryptonite for the Katz--Vafa method}\label{Sec:Kryptonite}

 The  representations of Spin($5$) of lowest dimensions are the trivial representation $\bf{1}$, 
the spin representation $\bf{4}$, 
the vector representation $\bf{5}$, the adjoint representation $\bf{10}$, and the traceless symmetric tensor representation $\bf{14}$.   
  
    In this section, we assume that the base of the USp($4$)-model is a surface. A direct consequence of that assumption is that all the fibers over codimension-two points are geometric and therefore split.
A generic USp($4$)-model has two codimension-two loci where the I$_4^{\text{ns}}$ fiber degenerates (see Figure \ref{Fig.V1}). 
 Over $S\cap V(4a_2 a_{6,4} -a_{4,2}^2)$, the fiber  I$_4^{\text{ns}}$   degenerates to a fiber of type I$_5$ containing a rational curve carrying the weights of the representation $\bf{4}$. Over 
 $S\cap V(a_2)$, the  fiber  I$_4^{\text{ns}}$  degenerates to an incomplete fiber of type I$_0^*$ carrying the weights of the representation $\mathbf{5}$.
 The fibers I$_5$ and I$_0^*$ correspond respectively to the affine Dynkin diagram for A$_4$ and D$_4$. 
 
 We would like to know if the branching rules  $A_4\downarrow C_2$ and $D_4\downarrow C_2$  for the adjoint representation will output the  representations $\mathbf{5}$ and $\mathbf{4}$ that we see geometrically by computing the weights of vertical rational curves appearing over the codimension-two loci  $S\cap V(4a_2 a_{6,4} -a_{4,2}^2)$ and   $S\cap V(a_2)$:
   \begin{align}
   &  (A_4\downarrow C_2) \overset{?}{\implies} \mathbf{4},\\
&  (D_4\downarrow  C_2)\overset{?}{\implies}\mathbf{5}.
  \end{align}
If that is the case, we will say that the Katz--Vafa method predicts the matter content $\mathbf{4}$ or $\mathbf{5}$ of the USp($4$)-model.

The Lie algebra A$_4$ has three types of  maximal subaglebras, namely 
A$_3$ (obtained by removing the node $\alpha_1$ or $\alpha_4$), A$_1$$\times$A$_2$ (obtained by removing one of the two interior  nodes $\alpha_2$ or $\alpha_4$), 
and C$_2$, which is an $S$-algebra. 
\begin{equation}
A_4:  \quad   A_3, \quad A_1\times A_2, \quad C_2.
\end{equation}

The algebra C$_2$ can be embedded in different ways as a subalgebra of A$_4$ since C$_2$ is also maximal subalagebra of A$_3$ 

Under the reduction $A_4\downarrow C_2$  for a  maximal subalgebra C$_2$, we have the following branching rule for the adjoint representation:
\begin{equation}\label{Eqn:14}
A_4\downarrow C_2:\quad 
\bf{24}=\bf{10}\oplus \bf{14}.
\end{equation}
The representation $\bf{14}$ is irreducible and corresponds to $\Gamma_{0,2}$. 
It is clear that this embedding does not produce the representation $\bf{5}$ that we see geometrically. 

The Lie algebra of type  A$_4$ also contains C$_2$ as a non-maximal subalgebra appearing in the  sequence of reduction combination  A$_4\downarrow$ A$_3\downarrow$ C$_2$. In that case, the branching rule for the adjoint representation can be derived by  steps and gives\footnote{
We can understand this branching rule as follows \cite{EK.KV}. 
Under the reduction $A_4\downarrow A_3$,  the adjoint of A$_4$ decomposes as follows 
$$
A_4\downarrow A_3: \quad \bf{24}= \bf{15}\oplus \bf{4}\oplus\overline{\bf{4}}\oplus\bf{1}.
$$
Under the reduction $A_3\downarrow C_2$, the adjoint, the fundamental, and the anti-fundamental of A$_3$ decomposes as 
$$
A_3\downarrow C_2:\quad 
\bf{15}\to \bf{10}\oplus \bf{5}, \quad \bf{4}\to \bf{4}, \quad  \bf{\overline{4}}\to \bf{4}.
$$
Hence, under A$_4\downarrow$ A$_3\downarrow$ C$_2$, we have 
$$
A_4\downarrow A_3\downarrow C_2:\quad
\bf{24}= \bf{10}\oplus  \bf{5}\oplus\bf{4}\oplus{\bf{4}}\oplus\bf{1}.
$$
} 
\begin{equation}
A_4\downarrow A_3\downarrow C_2:\quad
\bf{24}= \bf{10}\oplus  \bf{5}\oplus\bf{4}\oplus{\bf{4}}\oplus\bf{1}.
\end{equation}
Here, we get both the representations $\bf{4}$ and $\bf{5}$. However, that is in contradiction with the geometry of the USp($4$)-model as at no point of the base we see both the weights of the representation $\bf{5}$ and the representation $\bf{4}$.

The Lie algebra C$_2$ can be embedded in D$_4$ along many inequivalent channels defined by sequences of  maximal embeddings: 
\begin{equation}
\begin{cases}
D_4\downarrow B_3 \downarrow( \mathfrak{u}_1\oplus C_2) \downarrow C_2, \quad 
D_4 \downarrow (A_1\oplus C_2)\downarrow C_2, \\ 
D_4\downarrow B_3\downarrow A_3\downarrow C_2, \quad   \quad \qquad
D_4\downarrow (A_3\oplus \mathfrak{u}_1)\downarrow A_3\downarrow C_2
\end{cases}
\end{equation} 
We denote the adjoint of D$_4$ as $\mathbf{24}$. 
In each channel, we end up with the branching rule:
\begin{equation} 
\bf{28}= \bf{10}\oplus \bf{5}\oplus\bf{5}\oplus \bf{5}\oplus \bf{1}\oplus\bf{1}\oplus \bf{1}.
\end{equation}
This branching rule produces the representation $\bf{5}$ in accordance with the geometric analysis. 
It follows that in this case, the geometry is compatible with the Katz--Vafa method.

\subsection{SO($5$)-model }
The group USp($4$) is isomorphic to Spin($5$), the double-cover of the orthogonal group SO($5$). 
There are representations of USp($4$) that are only projective representations of SO($5$), that is particularly the case of the spin representation $\mathbf{4}$. 
The SO($5$)-model is studied in \cite{SO} and has the following matter representation: 
\begin{equation}
\mathbf{R}=\mathbf{10}\oplus\mathbf{5}.
\end{equation}
In contrast to USp($4$)-models, there is no representation $\mathbf{4}$ in the SO($5$)-model. 
It follows that while the generic USp($4$)-model has a matter content that is inconsistent with the Katz--Vafa method, the
 matter content of an SO($5$)-model is consistent with the Katz--Vafa method. 
 
  \subsection{General USp($2n$)-model }
 
 The matter content of a general USp($2n$)-model consists of the defining representation $\mathbf{V}$, the adjoint representation $\mathbf{Adj}=\text{Sym}^2_0 \textbf{V}$, and the traceless antisymmetric representation $\mathbf{\Lambda}^2_0$:
\begin{equation}
\mathbf{R}=  \mathbf{Adj}\oplus \mathbf{V}\oplus \mathbf{\Lambda}_0^2. 
\end{equation}
In a generic USp($2n$)-model we expect to see the weights of the representations $\mathbf{V}$ and $\mathbf{\Lambda}^2_0$  carried by rational curves components of singular fibers localized at different points in codimension-two. 
 
 The group USp($2n$) can also be embedded in SU($2n+1$) as a non-maximal subgroup sitting in a sequence of two 
 maximal embeddings for all $n\geq 2$. 
 For $n=2$, we also have in addition that C$_2$ can also be a  maximal subalgebra of A$_4$. And this gives the additional option that we discussed in equation \eqref{Eqn:14}.  
Thus, we have the following branching rules for $n\geq 2$ \cite{EK.KV}: 
\begin{align}
 A_{2n-1} \downarrow C_n &:& \quad 
 \mathbf{adj}\ \textup{SU(2n)}\   \    \    \     &= \mathbf{adj}\ \textup{USp(2n)}\oplus\mathbf{\Lambda}_0^2, \\
  A_{2n}\downarrow A_{2n-1} \downarrow C_n&:&\quad \mathbf{adj}\ \textup{SU(2n+1)} &= \mathbf{adj}\ \textup{USp(2n)}\oplus\mathbf{\Lambda}^2_0\oplus \mathbf{V}\oplus \mathbf{V}\oplus \mathbf{1}, \label{EqB2}\\
 A_{4}\downarrow C_2&:&\    \mathbf{adj}\ \textup{SU(5)} \quad\   \   &=\mathbf{adj}\ \textup{USp(4)} \oplus \mathbf{W}, 
\quad \mathbf{W}=\Gamma_{0,2}=\mathbf{14}\   .
\end{align}
 If $n=2$, we recall that $\mathbf{\Lambda}_0^2=\mathbf{5}$ and $\mathbf{V}=\mathbf{4}$,  and we are in the situation analyzed in section \ref{Sec:Kryptonite}.

The embedding  $ A_{2n-1}\downarrow C_n $ can describes the localized matter in the  representation  $\mathbf{\Lambda}_0^2$ and matches what we see in the geometry over $S\cap V(a_2)$. 
The sequence 
$A_{2n}\downarrow A_{2n-1} \downarrow C_n$ produces both $\mathbf{V}$ and $\mathbf{\Lambda}^2_0$ and therefore cannot correspond to localized matter as there are no points containing both weights.
The representation $\mathbf{14}$  coming from the branching rule for the maximal embedding $A_4\downarrow C_2$ does not match the geometry anywhere.

 Grassi and Morrison proposed using the  branching rule of equation \eqref{EqB2} to explain the matter content of  USp($2n$)-models \cite{GM1}. 
 The motivation behind the branching rules advocated in \cite{GM1} is  exploit the description of the Dynkin diagram of  $C_n$ of  USp($2n$) as the folding of the Dynkin diagram A$_{2n}$ of SU($2n+1$) via the $\mathbb{Z}/2\mathbb{Z}$ outer automorphism. 
  However, we point out that at no point over the base, we get both the representation $\mathbf{V}$ and $\mathbf{\Lambda}^2_0$.  
Thus, that would not explain the localized matter observed geometrically as illustrated in Figure \ref{fig:Sp4A}. 
The same conflict between the branching rules 
advocated in \cite{GM1} 
and fiber geometry appears for the G$_2$, F$_4$, and Spin($2n+1$)-models.  
Besides, when we are  the the branching point, 
We refer to \cite{EK.KV} for more information. 

 \clearpage

	\begin{figure}[htb]
	\begin{center}
			\scalebox{.8}{$
			\begin{array}{c}
			\begin{array}{c}
				\begin{tikzpicture}[scale=1]
					\draw[fill=black!8!] (0,0) ellipse (4.8cm and 2.2cm);
					\node at (2,-1.7) {\LARGE $B$};
\node at (-4.1,6) 
						{
							\begin{tikzpicture}[]
								\draw[dashed,thick,red] (-1.25,-1+.25) to (1.25,-1+.25);
								\draw[dashed,thick,red] (-1.25,-1-.25) to (1.25,-1-.25);
								\draw[dashed,thick,red]  (-1.25,-1+.25)  to (-1.25,-1-.25);
								\draw[dashed,thick,red]  (1.25,-1+.25)  to (1.25,-1-.25);
								\node[draw,ultra thick,circle,label=left:{$C_0$}](C0) at (0,0) {};
								\node[draw,ultra thick,circle,label=left:{$C_1^+$}](C1+) at (-1,-1) {};
								\node[draw,ultra thick,circle,label=right:{$C_1^-$}](C1-) at (1,-1) {};
								\node[draw,ultra thick,circle,label=left:{$C_2$}](C2) at (0,-2) {};;
								\draw[ultra thick] (C0) -- (C1+) -- (C2) -- (C1-) -- (C0);
							\end{tikzpicture}
						};

\draw[dotted,thick] (-4.2,-.2) --+ (0,5);
					\draw[dotted,thick] (-1.4,-.2) --+ (0,3.2);
									\draw[dotted,thick] (2.2,-.2) --++ (0,3);	
					\draw[dotted,thick] (4,1) -- ++(0,4.4);

					\node at (-1.4,4) 
						{
							\begin{tikzpicture}[scale=.8]
								\node[draw,ultra thick,circle,label=left:{$C_0$}](C0) at (0,0) {};
								\node[draw,ultra thick,circle,label=left:{$C_1^+\  $}](C1+) at (-1,-1) {};
								\node[draw,ultra thick,circle,label=right:{$ \ C_1^-$}](C1-) at (1,-1) {};
								\node[draw,ultra thick,circle,label=left:{$C_2^+\   $}](C21) at (-1,-2.5) {};
							  \node[draw,ultra thick,circle,label=right:{$ \   C_2^-$}](C22) at (1,-2.5) {};
								\draw[ultra thick] (C0) -- (C1+) -- (C21)--(C22) -- (C1-) -- (C0);
								
							\end{tikzpicture}
						};

					\node at (1.88,4) 
						{
							\begin{tikzpicture}
								\node[draw,ultra thick,circle,xshift=200,label=left:{$C_0$}](C0') at (0,0) {};
								\node[draw,ultra thick,circle,xshift=200,scale=.65,label=left:{$C_1^{'}$}](C13) at (0,-1) {2};
								\node[draw,ultra thick,circle,xshift=200,label=left:{$C_2$}](C2) at (0,-2) {};
								\draw[ultra thick] (C0') -- (C13) -- (C2);
							\end{tikzpicture}
						};
						
						\node at (-3.4,-.6) {$S$};
												\node at (-1.35,-1.4) {$\Delta'$};
												\filldraw (-1.35,-.35) circle (3pt);
												\filldraw (2.2,-.35) circle (3pt);
												\node at (3,-.7) {\scalebox{.8}{$S\cap V(a_2)$}};
												\node at (0,-.7) {\scalebox{.8}{$S\cap V(4a_2 a_{6,4}- a_{4,2}^2)$}};
						\draw[ultra thick] (-4.4,-.35) --++ (9,0);

						\node(a1) at (4,1) {};
					\node(b1) at (2.1,-.29) {};
					\node(c1) at (.1,1.4) {};
					\node(d1) at (-1.8,-1.4) {};
					 \draw[ultra thick] plot [smooth,tension=1] coordinates { (a1) (b1) (c1) (d1)};

					\node at (4,6) 
						{
							\begin{tikzpicture}[scale=2.4]
								\draw[scale=.5,domain=-1.2:1.2,variable=\x, ultra thick] plot({\x*\x-1,\x*\x*\x-\x-5});
							\end{tikzpicture}
						};
	
				\end{tikzpicture} 
			\end{array}\\\\
				\begin{array}{|l|c|}
				 \hline
				\text{Weierstrass model} &  zy^2=x^3 + a_2 x^2z + a_{4,2} s^2 x z^2+ a_{6,4} s^4z^3\\ \hline
				\text{Discriminant} & S\cap	\Delta' :\quad s= a_2^2(4a_2 a_{6,4}- a_{4,2}^2)=0 \\\hline
				\text{Matter representations} & \mathbf{10}\oplus\mathbf{5}\oplus\mathbf{4}  \\\hline
							\end{array}
			\end{array}
			$}
			\end{center}
	\caption{Geometry of the generic USp($4$)-model for a  two-dimensional  base $B$.	}
	\label{fig:Sp4A}
%
%
	\begin{center}
			\scalebox{1}{$
			\begin{array}{c}
			\begin{array}{c}
				\begin{tikzpicture}[scale=1]
					\draw[fill=black!8!] (0,0) ellipse (3.8cm and 1.7cm);
					\node at (-2.4,-.45) {$V(2s+a_2)$};
					\node at (2.4,-.45) {$V(2s-a_2)$};
					\node at (0,-1) {\LARGE $B$};
					\node at (3,.3) {$V(s)$};
					\draw[ultra thick] (-2.3,1) --++ (4.6,-2);
					\draw[ultra thick] (2.3,1) --++ (-4.6,-2);
					\draw[ultra thick] (-3,0) --++ (6,0);
					\draw[dotted,thick] (-1.3,.55) -- (-1.3,2.8);
					\draw[dotted,thick] (2.3,1) -- (2.3,3);
					\draw[dotted,thick] (-3,0) -- (-3,3.75);
					\draw[dotted,thick] (0,0) -- (0,3.75);
					\node at (-.3,5) 
						{
							\begin{tikzpicture}
								\node[draw,ultra thick,circle,xshift=200,label=left:{$C_0$}](C0') at (0,0) {};
								\node[draw,ultra thick,circle,xshift=200,scale=.65,label=left:{$C_1^{'}$}](C13) at (0,-1) {2};
								\node[draw,ultra thick,circle,xshift=200,label=left:{$C_2$}](C2) at (0,-2) {};
								\draw[ultra thick] (C0') -- (C13) -- (C2);
							\end{tikzpicture}
						};
					\node at (-3,5) 
						{
							\begin{tikzpicture}[]
								\draw[dashed,thick,red] (-1.25,-1+.25) to (1.25,-1+.25);
								\draw[dashed,thick,red] (-1.25,-1-.25) to (1.25,-1-.25);
								\draw[dashed,thick,red]  (-1.25,-1+.25)  to (-1.25,-1-.25);
								\draw[dashed,thick,red]  (1.25,-1+.25)  to (1.25,-1-.25);
								\node[draw,ultra thick,circle,label=left:{$C_0$}](C0) at (0,0) {};
								\node[draw,ultra thick,circle,label=left:{$C_1^+$}](C1+) at (-1,-1) {};
								\node[draw,ultra thick,circle,label=right:{$C_1^-$}](C1-) at (1,-1) {};
								\node[draw,ultra thick,circle,label=left:{$C_2$}](C2) at (0,-2) {};;
								\draw[ultra thick] (C0) -- (C1+) -- (C2) -- (C1-) -- (C0);
							\end{tikzpicture}
						};
					\node at (-1.3,3.1) 
						{
							\begin{tikzpicture}[scale=2]
								\draw[scale=.5,domain=-1.2:1.2,variable=\x, ultra thick] plot({\x*\x-1,\x*\x*\x-\x-5});
							\end{tikzpicture}
						};
					\node at (2.3,3.5) 
						{
							\begin{tikzpicture}[scale=2]
								\draw[scale=.5,domain=-1.2:1.2,variable=\x, ultra thick] plot({\x*\x-1,\x*\x*\x-\x-5});
							\end{tikzpicture}
						};
	
				\end{tikzpicture} 
			\end{array}\\\\
				\begin{array}{|l|c|}
				 \hline
				\text{Weierstrass model} & y^2 z = x^3 + a_2 x^2 z + s^2 x z^2\\ \hline
				\text{Discriminant} & 	\Delta =s^4 (4 s^2 - a_2^2) \\\hline
				\text{Matter representations} & \mathbf{10}\oplus\mathbf{5}  \\\hline
							\end{array}
			\end{array}
			$}
			\end{center}
	\caption{Summary of the geometry o the SO($5$)-model as studied in \cite{SO}.	}
	\label{fig:SO5}
	\end{figure}

\clearpage

\section*{Acknowledgements}
The authors are grateful to Ravi Jagadeesan,  Monica Jinwoo Kang, Sabrina Pasterski, and Julian Salazar for discussions. 
M.E. is supported in part by the National Science Foundation (NSF) grant DMS-1406925  and DMS-1701635 ``Elliptic Fibrations and String Theory''.  The research of P.J. 
is supported by the U.S. Department of Energy, Office of Science, Office of High Energy
Physics of U.S. Department of Energy under grant Contract Numbers DE-SC0012567
and DE-SC0019127.

\appendix 

\section{Weierstrass models and Deligne's formulaire }
\label{sec:Wmodel}
In this section, we  introduce the notation of  Deligne \cite{Deligne.Formulaire}. 
Let  $\mathscr{L}$ be a line bundle over  a normal quasi-projective variety  $B$.  We define the projective bundle of lines:
\begin{equation}
\pi: X_0=\mathbb{P}_B[\mathscr{O}_B\oplus \mathscr{L}^{\otimes 2}\oplus \mathscr{L}^{\otimes 3}]\longrightarrow B.
\end{equation} 
The relative projective coordinates of $X_0$ over $B$ are denoted $[z:x:y]$,  where $z$, $x$, and $y$ are defined  by the natural injection of 
 $\mathscr{O}_B$,   $\mathscr{L}^{\otimes 2}$, and $\mathscr{L}^{\otimes 3}$ into $\mathscr{O}_B\oplus \mathscr{L}^{\otimes 2}\oplus \mathscr{L}^{\otimes 3}$, respectively. Hence, 
  $z$ is a section of $\mathscr{O}_{X_0}(1)$, $x$ is a section of $\mathscr{O}_{X_0}(1)\otimes \pi^\ast \mathscr{L}^{\otimes 2}$, and
$y$ is a section of  $\mathscr{O}_{X_0}(1)\otimes \pi^\ast \mathscr{L}^{\otimes 3}$.

\begin{defn}
 A  Weierstrass model is an elliptic fibration $\varphi: Y\to B$  cut out by the zero locus of  a section of the  
line bundle $\mathscr{O}_{X_0}(3)\otimes \pi^\ast \mathscr{L}^{\otimes 6}$ in $X_0$. 
\end{defn}
The most general Weierstrass equation is written in the notation of Tate as
\begin{equation}
y^2z+ a_1 xy z + a_3  yz^2 -(x^3+ a_2 x^2 z + a_4 x z^2 + a_6 z^3) =0,
\end{equation} 
where $a_i$ is a section of $\pi^\ast \mathscr{L}^{\otimes i}$. 
The line bundle $\mathscr{L}$ is called the {\em fundamental line bundle} of the Weierstrass model $\varphi:Y\to B$. It can be defined directly from $Y$ as 
$\mathscr{L}=R^1 \varphi_\ast Y$. 
Following Tate and Deligne, we introduce the following quantities 
\begin{align}
\begin{cases}
b_2 &= a_1^2 + 4 a_2\\
b_4 &= a_1 a_3 + 2 a_4\\
b_6 &= a_3^2 + 4 a_6\\
b_8 &= a_1^2 a_6 - a_1 a_3 a_4 + 4 a_2 a_6 + a_2 a_3^2 - a_4^2\\
c_4 &= b_2^2 - 24 b_4\\
c_6 &= -b_2^3 + 36 b_2 b_4 - 216 b_6\\
\Delta &= -b_2^2 b_8 - 8 b_4^3 - 27 b_6^2 + 9 b_2 b_4 b_6\\
j& = {c_4^3}/{\Delta}
\end{cases}
\end{align}
The  $b_i$ ($i=2,3,4,6)$ and $c_i$   ($i=4,6$) are  sections of $\pi^\ast \mathscr{L}^{\otimes i}$. 
The discriminant $\Delta$ is a section of $\pi^\ast \mathscr{L}^{\otimes 12}$. 
They satisfy the two relations
\begin{align}
1728 \Delta=c_4^3-c_6^2, \quad 4b_8 = b_2 b_6 - b_4^2.
\end{align}
Completing the square in $y$ gives 
\begin{equation}
zy^2 =x^3 +\tfrac{1}{4}b_2 x^2 + \tfrac{1}{2} b_4 x + \tfrac{1}{4} b_6.
\end{equation}
Completing the cube in $x$ gives the short form of the Weierstrass equation
\begin{equation}
zy^2 =x^3 -\tfrac{1}{48} c_4 x z^2 -\tfrac{1}{864} c_6 z^3.
\end{equation}

\section{Representations of USp($2n$)}

The Lie algebra C$_\ell$ is studied in Planche III of \cite{Bourbaki.GLA46}.
The  Lie algebra C$_\ell$ has rank $\ell$, Coxeter number $2\ell$, dimension $\ell(2\ell +1)$, and  $\mathbb{Z}/2\mathbb{Z}$ center. 
Its root system consists of $2\ell^2$ roots, $2\ell$ of which are short roots. The Weyl group $W(\text{C}_\ell)$ of C$_\ell$ 
 is the semi-direct product of the symmetric group $\mathfrak{S}_\ell$ with the group  $(\mathbb{Z}/2\mathbb{Z})^\ell$, 
 hence, it has dimension  $2^\ell \ell!$

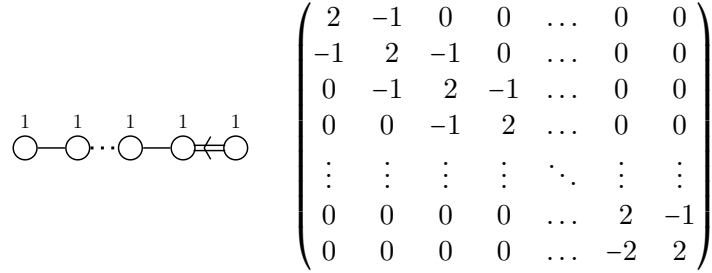
\begin{figure}[htb]
\begin{center}
\scalebox{.7}{$\begin{array}{c} \begin{tikzpicture}
				\node[draw,circle,thick,scale=1.25,label=above:{1}] (3) at (.8,0){};
				\node[draw,circle,thick,scale=1.25,label=above:{1}] (4) at (1.8,0){};
				\node[draw,circle,thick,scale=1.25,label=above:{1}] (5) at (2.8,0){};	
				\node[draw,circle,thick,scale=1.25,label=above:{1}] (6) at (3.8,0){};	
				\node[draw,circle,thick,scale=1.25,label=above:{1}] (7) at (4.8,0){};	
				\draw[thick]   (3) to (4);
				\draw[thick] (5) to (6);
				\draw[ultra thick, loosely dotted] (4) to (5) {};
				\draw[thick] (4.,-0.05) --++ (.6,0){};
				\draw[thick] (4.,+0.05) --++ (.6,0){};

				\draw[thick]
					(4.2,0) --++ (-60:.25)
					(4.2,0) --++ (60:.25);
			\end{tikzpicture}\end{array}$}
	\quad
	$
\begin{pmatrix}
\ 2 & -1 & 0 & 0 &\ldots   & 0 & 0\\
-1& \ 2 & -1 & 0 &\ldots  & 0 & 0\\
0 & -1 & \ 2 & -1 & \ldots & 0 & 0\\
0 & 0 & -1 & \ 2 & \ldots & 0 & 0\\
\vdots & \vdots & \vdots & \vdots &\ddots & \vdots & \vdots \\
0 & 0 & 0 & 0 & \ldots & \ 2 & -1\\
0 & 0 & 0 & 0 & \ldots & -2 & 2 
\end{pmatrix}
$
	\end{center}
	\caption{Dynkin diagram and Cartan matrix of  C$_\ell$.}
		\end{figure}

\begin{table}[htb]
\begin{center}
\begin{tabular}{|c|c|c|c|c|}
\hline
Reps  & Tableau & Highest weight & Dimension &Properties \\
\hline
& & & & \\
$\bf{V}$ &  \scalebox{.6}{\ydiagram{1}}  & $(1,0,0,\ldots, 0)$ & $2n$ & Pseudo-real, minuscule \\
& & & & \\
\hline
& & & & \\
${\mathbf{\Lambda}}_0^2 $ &  \scalebox{.6}{\ydiagram{2}}&$(0,1,0,\ldots, 0)$ &$(n-1)(2n+1)$& Real \\
& & &&  \\
\hline
& & & & \\
$\text{\bf{Sym}}^2_0 \bf{V}=\mathbf{adj}$ &  \scalebox{.6}{\ydiagram{1,1}} &$(2,0,0,\ldots,0)$ & $n(2n+1)$& Real \\
& & & & \\
\hline
& & & & \\
$\bf{W}=\Gamma_{0,2,0,\dots, 0}$& \scalebox{.6}{\ydiagram{2,2}} &$(0,2,0,\ldots,0)$ & $n(2n+3)$& Real \\
&  & & & \\
\hline 
\end{tabular}
\end{center}

\caption{
Some representations of USp($2n$) identified by their highest weight (in a base of fundamental weights), 
their Young tableaux, and their dimensions.   
The last row indicates if the representation is complex, real, or  pseudo-real.
\label{Table:RepCn}}
\end{table}

The group USp($2n$) is a maximal subgroup of SU($2n$) for all $n\geq 2$. 
The group USp($4$) is a special case as it is a maximal subgroup of both SU($4$) and SU($5$). 
In general, the group USp($2n$) can also be embedded in SU($2n+1$) as a non-maximal subgroup sitting in a sequence of two 
 maximal embeddings. 
Thus, we have the following two branching rules for $n\geq 2$ \cite{EK.KV}:
\begin{align}
 A_{2n-1} \downarrow C_n &:& \quad 
 \mathbf{adj}\ \textup{SU(2k)}\   \    \    \     &= \mathbf{adj}\ \textup{USp(2k)}\oplus\mathbf{\Lambda}_0^2, \\
  A_{2n}\downarrow A_{2n-1} \downarrow C_n&:&\quad \mathbf{adj}\ \textup{SU(2k+1)} &= \mathbf{adj}\ \textup{USp(2k)}\oplus\mathbf{\Lambda}^2_0\oplus \mathbf{V}\oplus \mathbf{V}\oplus \mathbf{1}.
\end{align}
For $n=2$, we also have in addition the maximal embedding 
\begin{equation}
 A_{4}\downarrow C_2:\quad  \mathbf{adj}\ \textup{SU(5)}  =\mathbf{adj}\ \textup{USp(4)} \oplus \mathbf{W}, 
\quad \mathbf{W}=\Gamma_{0,2}\   .
\end{equation}

The matter content of a USp($2n$)-model consists of the defining representation $\mathbf{V}$, the adjoint representation $\mathbf{Adj}=\text{Sym}^2_0 \textbf{V}$, and the traceless antisymmetric representation $\mathbf{\Lambda}^2_0$:
\begin{equation}
\mathbf{R}=  \mathbf{Adj}\oplus \mathbf{V}\oplus \mathbf{\Lambda}_0^2. 
\end{equation}
In a generic USp($2n$)-model we expect to see the weights of the representations $\mathbf{V}$ and $\mathbf{\Lambda}^2_0$  carried by rational curves components of singular fibers localized at different points. 
The sequence 
$A_{2n}\downarrow A_{2n-1} \downarrow C_n$ produces both $\mathbf{V}$ and $\mathbf{\Lambda}^2_0$ and therefore cannot correspond to localized matter.
The embedding  $ A_{2n-1}\downarrow C_n $ can describes the localized matter in the  representation  $\mathbf{\Lambda}_0^2$.

The  defining representation $\textbf{V}$ of C$_n$ is also its shortest non-trivial representation; the representation $\textbf{V}$ is minuscule,  pseudo-real, and of dimension $2n$. 
The traceless symmetric square  $\text{Sym}^2_0 \textbf{V}$  is irreducible, of dimension $n(2n+1)$, and corresponds to the adjoint representation. 
The second antisymmetric tensor product of $\textbf{V}$ is denoted $\boldsymbol{\Lambda}^2$ and  is not irreducible, but rather decomposes into a direct sum of the trivial representation and an irreducible representation of dimension 
 $(n-1)(2n+1)$ that we denote by $\boldsymbol{\Lambda}^2_0$. The trivial representation  is given by the trace of the antisymmetric tensor (note that the trace is defined with respect to the antisymmetric symplectic metric) and the 
representation $\boldsymbol{\Lambda}^2_0$ is the traceless antisymmetric projection of  $\boldsymbol{\Lambda}^2$. 
The representation $\boldsymbol{\Lambda}^2_0$ is irreducible,  quasi-minuscule, and real.
The representation  $\Gamma_{0,2,0,\cdots, 0}$ is irreducible of dimension $n(2n+3)$---see Table \ref{Table:RepCn}.

A Young tableau  is a finite left-justified array of boxes with weakly decreasing row
lengths. A Young tableau  is uniquely identified by  the list of the  lengths of its  rows.
 All the symmetric tensor products $\textbf{V}^{(k)}=\text{Sym}^k \textbf{V}$ are irreducible, but the exterior tensor products $\boldsymbol{\Lambda}^k$ are not. 
However, the kernel of the contraction map $\boldsymbol{\Lambda}^k \to \boldsymbol{\Lambda}^{k-2}$ induced by the invariant two-form of USp($2n$) is also an irreducible representation that we denote by $\boldsymbol{\Lambda}^k_0$. 
Irreducible representations of USp($2n$) are  in one-to-one correspondence with traceless tensors satisfying the symmetries of 
a Young tableau with no more than $n$ rows. 
The irreducible  representations of USp($2n$) 
with highest weight $\boxed{a_1\   \cdots\   a_n}$ 
in the basis of fundamental weights  
 is denoted by  $\Gamma_{a_1,a_2, \cdots, a_n}$. 
 We have the  restriction that all $a_i\geq 0$. 
 The irreducible representation  $\Gamma_{a_1,a_2, \cdots, a_n}$
  corresponds to the Young  tableau with  rows of  lengths 
$(\ell_1, \ldots, \ell_n)$ such that 
$\ell_k=\sum_{i=k}^n a_i$ for $k=1, \ldots, n$.

\end{document}